\DeclareRobustCommand\onedot{\futurelet\@let@token\@onedot}
\def\@onedot{\ifx\@let@token.\else.\null\fi\xspace}
\def\eg{\emph{e.g}\onedot} 
\def\ie{\emph{i.e}\onedot} \def\Ie{\emph{I.e}\onedot}
\def\st{$st~$}
\theoremstyle{definition}
\newtheorem{defn}{Definition}
\theoremstyle{plain}
\newtheorem{thm}{Theorem}
\newtheorem{lem}{Lemma}
\newtheorem{crl}{Corollary}
\theoremstyle{remark}
\newtheorem*{rmk}{Remark}
\newcommand{\CC}{\mathcal{C}}
\newcommand*{\rom}[1]{\expandafter\@slowromancap\romannumeral #1@}
\newif\iftechrep \techrepfalse
\def\checkiftechreport#1{
\expandafter\iistechreport#1TR. \techreptrue\fi}
\def\iistechreport#1TR#2.{\def\tmp{#2}\ifx\tmp\empty\else}
\def\checkTR{\checkiftechreport{\jobname}}
\title{Connectivity in Interdependent Networks}
\author{Jianan Zhang, and Eytan Modiano \thanks{The authors are with the Laboratory for Information and Decision Systems, Massachusetts Institute of Technology. This work was supported in part by DTRA grants HDTRA1-13-1-0021 and HDTRA1-14-1-0058.}}
\begin{document}
\maketitle
\begin{abstract}
We propose and analyze a graph model to study the connectivity of interdependent networks. Two interdependent networks of arbitrary topologies are modeled as two graphs, where every node in one graph is supported by supply nodes in the other graph, and a node fails if all of its supply nodes fail. Such interdependence arises in cyber-physical systems and layered network architectures. 

We study the \emph{supply node connectivity} of a network: namely, the minimum number of supply node removals that would disconnect the network. We develop algorithms to evaluate the supply node connectivity given arbitrary network topologies and interdependence between two networks. Moreover, we develop interdependence assignment algorithms that maximize the supply node connectivity. We prove that a random assignment algorithm yields a supply node connectivity within a constant factor from the optimal for most networks. 
\end{abstract}

\section{Introduction}
The development of smart cities and cyber-physical systems has brought interdependence between once isolated networks and systems. In interdependent networks, one network depends on another to achieve its full functionality. 
Examples include smart power grids \cite{rosato2008modelling, parandehgheibi2013robustness}, transportation networks \cite{gu2011onset, yagan2012optimal}, and layered communication networks \cite{ghani2000ip, lee2011cross}. Failures in one network not only affect the network itself, but also may cascade to another network that depends on it. For example, in the Italy blackout in 2003, an initial failure in the power grid led to reduced functionality of the communication network, which led to further failures in the power grid due to loss of communication and control \cite{rosato2008modelling, buldyrev2010catastrophic}. Thus, the robustness of a network relies on both its own topology and the interdependence between different networks.

Interdependent networks have been extensively studied in the statistical physics literature based on random graph models since the seminal work of \cite{buldyrev2010catastrophic}. Nodes in two random graphs are interdependent, and a node is functional if both itself and its interdependent node are in the largest component of their respective graphs. If a positive fraction of nodes are functional as the total number of nodes approaches infinity, the interdependent random graphs percolate. The condition for percolation measures the robustness of the interdependent networks. While these models are analytically tractable, percolation may not be a key indicator for the functionality of infrastructure networks. For example, a network would lose most of its functionality when a large fraction of nodes are removed, while the graph still percolates.

A few models have been proposed for specific applications to capture the dependence between networks, such as interdependent power grids and communication networks \cite{parandehgheibi2013robustness, parandehgheibi2015modeling}, and IP-over-WDM networks \cite{lee2011cross, lee2014maximizing}. These models consider finite size, arbitrary network topology, and incorporate dynamics in real-world networks. Instead of percolation, more realistic metrics are used to capture the robustness of interdependent networks, such as the amount of satisfied power demand, or traffic demand. These models are able to capture important performance metrics in real-world networks, at the cost of more complicated modeling and analysis.

We develop an analytically tractable model for interdependent networks which aims to capture key robustness metrics for infrastructure networks. In contrast to the random graph models where some assumptions are difficult to justify in infrastructure networks (\eg, very large network size and randomly placed links), we use a deterministic graph model to represent each network, where nodes and edges are specified by the topology of an infrastructure network. 
We develop metrics that measure the robustness of interdependent networks, by generalizing canonical metrics for the robustness of a single network. Moreover, our model is simple enough to allow for the evaluation of the robustness of interdependent networks, and allows us to obtain insights and principles for designing robust interdependent networks. 

\subsection{Related work}
A closely related model is the shared risk group model \cite{srg2, hu2003diverse,coudert2007shared,lee2011cross}, where a set of edges or nodes share the same risk and can be removed by a single failure event. The model is used to study the robustness in layered communication networks such as IP-over-WDM networks. In interdependent networks, multiple demand nodes in one network may depend on the same node in another network, and they share the same risk (of the supply node's failure). Suppose that a demand node has multiple supply nodes, and is content to have at least one supply node. The interdependent networks can be viewed as a generalized shared risk group model, given that the occurrences of multiple risks, instead of one single risk, are required to remove a node in the interdependent networks.

The shared risk group model can be represented by a colored graph (or labeled graph), in which edges or nodes that share the same risk have the same color (or label) \cite{coudert2007shared, yuan2005minimum, klein2016colored}. Complexity results and approximation algorithms have been developed to compute the minimum number of colors that appear in an \emph{edge cut} that disconnects a colored graph \cite{coudert2007shared, zhang2011approximation}. In interdependent networks, we study node failures due to the removals of their supply nodes. Thus, our focus is on the \emph{node cut} in a colored graph with colored nodes and regular edges. While most results for edge cuts that separate a pre-specified source-destination pair (\ie, \st edge cuts) can be naturally extended to \st node cuts, the extension is not obvious when the global edge or node cuts of a graph are considered. Although it is possible to transform a node cut problem in an undirected graph into an edge cut problem in a directed graph, the nature and analysis of the problem in a directed graph are different from the problem in an undirected graph, when global cuts are considered \cite{karger2001randomized, censor2014distributed}. Thus, new techniques need to be developed in this paper to study the global node cuts in a colored graph.

While most studies on the shared risk group model have focused on the evaluation of robustness metrics of a given network, there have also been previous works that take a \emph{network design} approach to optimize the metrics. For example, in optical networks, where two logical links share the same risk if they are supported by the same physical link, \cite{lee2011cross, lee2014maximizing} developed lightpath routing algorithms that maximize the number of physical link failures that a given logical topology can tolerate. In this paper, we study the interdependence assignment that maximizes the number of supply node failures that a network can tolerate (to stay connected). Instead of solving difficult integer programs as in most network design literature, we apply graph algorithms, \eg, the vertex sampling and graph partitioning techniques \cite{censor2014new, Censor-Hillel2015}, to develop polynomial time algorithms that have provable performance guarantees. The vertex sampling techniques provide bounds on the probability that the graph is connected after random node removals. We build connections between the node removals in a single graph and the node failures in interdependent networks, and study the connectivity of interdependent networks.

\subsection{Our contributions}
We propose an analytically tractable model for two interdependent networks, and study the impacts of node failures in one network on the other network. We add a minimal ingredient to the classical graph model to capture interdependence, and define supply node connectivity as a robustness metric for our model, analogous to the widely accepted cut metric (node connectivity) for the classical graph model.
We prove the complexity, and develop integer programs to evaluate the supply node connectivity, both for a given pair of nodes and for the entire network. Moreover, we propose a polynomial time algorithm that computes the supply node connectivity for a special class of problems, based on which we develop an approximation algorithm for the general problem.

In addition, we study the network design problem of improving the robustness of interdependent networks by assigning interdependence between two networks. We propose a simple assignment algorithm that maximizes the supply node connectivity of an \st pair, by assigning node-disjoint paths with different supply nodes while allowing nodes in the same path to have the same supply node. Based on a similar idea and considering disjoint connected dominating sets, we develop an assignment algorithm that approximates the optimal global supply node connectivity to within a polylogarithmic factor. Finally, we propose a random assignment algorithm under which, with high probability, the global supply node connectivity is within a constant factor from the optimal in most cases, and at worst is within a logarithmic factor from the optimal.

The rest of the paper is organized as follows. In Section \ref{sc:model}, we develop a one-way dependence model, where a demand network depends on a supply network. This allows us to deliver key results and intuitions for studying the impacts of node failures in one network on its interdependent network, using simplified notations and presentations. We study this one-way dependence model in Sections \ref{sc:evaluate} and \ref{sc:assign}. In Section \ref{sc:evaluate}, we evaluate the supply node connectivity of the demand network. In Section \ref{sc:assign}, we develop algorithms, which assign supply nodes to demand nodes, to maximize the supply node connectivity. In Section \ref{sc:twoway}, we focus on the bidirectional interdependence model and generalize the above results. Section \ref{sc:simu} provides simulation results. Finally, Section \ref{sc:conclude} concludes the paper.

\section{One-way dependent network model and colored graph representation}
\label{sc:model}
\subsection{One-way dependence model}
We start by considering a one-way dependence model, where nodes in a \emph{demand network} depend on nodes in a \emph{supply network}. This simplified model allows us to focus on the impacts of node failures in one network on the other network. Let two undirected graphs $G_1(V_1, E_1)$ and $G_2(V_2, E_2)$ represent the topologies of the demand and supply networks, respectively. Each node in the demand network depends on one or more nodes in the supply network. The dependence is represented by the directed edges in Fig. \ref{fig:oneway}. Every supply node provides substitutional supply to the demand nodes. A demand node is functional if it is adjacent to at least one supply node. Figure \ref{fig:oneway} illustrates the failure of a demand node due to the removals of its supply nodes.

As a more concrete example, we use $G_1$ to represent a communication network and $G_2$ to represent a power grid. Each node in $G_1$ represents a router, and each node in $G_2$ represents a power station. A router receives power from one or more power stations, and fails if all of the supporting power stations fail.

\begin{figure}[h]
\begin{centering}
\leavevmode\includegraphics[width=0.75\linewidth]{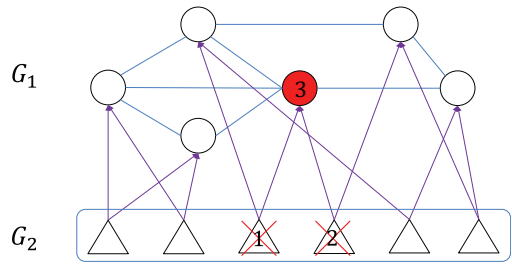}
\caption{Demand node 3 fails if both supply nodes 1 and 2 fail.}
\label{fig:oneway}
\end{centering}
\end{figure}

We aim to characterize the impacts of node removals in the supply network on the connectivity of the demand network. Recall that (see, \eg \cite{even1975network}), in a single graph, a \emph{node cut} (\ie, vertex cut) is a set of nodes whose removals either disconnect the graph into more than one connected component, or make the remaining graph trivial (where a single node remains). The \emph{node connectivity} of a graph is the number of nodes in the smallest node cut. In the one-way dependence model, the connectivity of the demand network depends not only on its topology $G_1(V_1, E_1)$, but also on the supply-demand relationship. We define the \emph{supply node cut} and \emph{supply node connectivity} of the demand network as follows.
\begin{defn}
A \emph{supply node cut} of the demand graph is a set of supply nodes whose removals induce a node cut in the demand graph. (Mathematically, a supply node cut of $G_1$ is a set of nodes $V_s \subseteq G_2$, such that nodes $V_d \subseteq G_1$ do not have any supply nodes other than $V_s$, and that $V_d$ contain a node cut of $G_1$.)

The \emph{supply node connectivity} is the number of nodes in the smallest supply node cut.
\end{defn}

The above definition is a generalization of the traditional node cut to include a superset of a cut. This is necessary because the removals of supply nodes may not correspond to proper cuts of the demand graph (see Fig. \ref{NodeCut}). Under this definition, graphs with larger supply node connectivity are more robust under supply node failures.

\begin{rmk}
In Fig. \ref{NodeCut}, suppose that every node has a single supply node, and that the red nodes share the same supply node $u \in G_2$. By removing $u$, the left graph stays connected after removing all the three red nodes, while the right graph is disconnected. However, the left graph is less robust under the removal of supply node $u$, because the failed nodes in the left graph include all the failed nodes in the right graph. Thus, ``graph connectivity after supply node removals'' does not serve as a good measure for the robustness of the demand graph when supply nodes fail. This motivates our definition of supply node cut and supply node connectivity. According to our definition, the supply node connectivity of the left graph is one.
\end{rmk}

\begin{figure}[h]
\begin{centering}
\leavevmode\includegraphics[width=.9\linewidth]{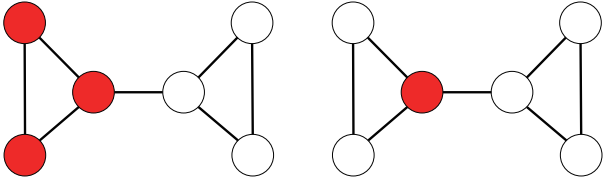}
\caption{Let the three red nodes in the left figure be supported by the same supply node. Removing the supply node leads to the failure of the three red nodes, which do not form a proper cut but form a superset of a proper cut (\ie, the red node in the right figure). The supply node is viewed as a supply node cut.}
\label{NodeCut}
\end{centering}
\end{figure}

We study the connectivity of a source-destination pair $(s,t) \in G_1$ as a starting point, which provides insights towards the graph connectivity with simpler analysis. In a graph, an \emph{$st$ node cut} is a set of nodes, excluding $s$ and $t$, whose removals disconnect $s$ from $t$. The number of nodes in the smallest \st node cut is the \emph{\st node connectivity}. Analogously, we define \emph{\st supply node cut} and \emph{\st supply node connectivity} as follows.
\begin{defn}
  An \emph{\st supply node cut} is a set of supply nodes whose removals induce an \st node cut. (Mathematically, an \st supply node cut is a set of nodes $V_s^{st} \subseteq G_2$, such that nodes $V_d^{st} \subseteq G_1$ do not have any supply nodes other than $V_s^{st}$, and that $V_d^{st}$ contain an \st node cut.)

  The \emph{\st supply node connectivity} is the number of nodes in the smallest \st supply node cut.
\end{defn}

An \st supply node cut may induce demand node failures $V_d^{st}$ including $s$ and/or $t$, since $s$, $t$ may share the same supply nodes with nodes in the \st node cut. However, removing $V_d^{st} \setminus \{s,t\}$ must disconnect $s$ from $t$.

We consider non-adjacent $s$ and $t$ throughout the paper. Otherwise, if $s$ and $t$ are adjacent, they are always connected when other nodes are removed, and there is no node cut that disconnects them.

\subsection{Transformation to a colored graph}
\label{sc:tranColor}
Our model is closely related to the shared risk node group (SRNG) model \cite{coudert2007shared,datta2004diverse}. In the SRNG model, several nodes share the same risk, and can be removed by a single failure event. In interdependent networks, if every node has \emph{one} supply node, then the demand graph becomes exactly the same as the SRNG model, where the demand nodes that have the same supply node share the same risk.

The SRNG model can be represented by a \emph{colored graph}, where the nodes that have the same color share a common risk. We define\footnote{Previous studies on colored graphs focused on color edge cuts in colored graphs with colored edges and regular nodes. Much less is known about the color node cut, a counterpart of color edge cut, in colored graphs with colored nodes and regular edges. In fact, to the best of our knowledge, there is no formal definition for color node cut.} \emph{color node cut} and \emph{\st color node cut} as follows.
\begin{defn} \label{def:colorglobal}
  Given a colored graph $G(V,E,\CC)$ with colored nodes $V$, regular edges $E$, and node-color pairs $\CC$ that represent the color for each node, a \emph{color node cut} is a set of colors $C_c$ such that the nodes covered by colors $C_c$ contain a node cut of $G$.

  A \emph{minimum color node cut} of $G$ is a color node cut $C_{c \min}$ that has the minimum number of colors. The number of colors in $C_{c \min}$ is the \emph{value} of the minimum color node cut.
\end{defn}

\begin{defn}
\label{def:colorst}
  Given a colored graph $G(V,E,\CC)$ with colored nodes $V$, regular edges $E$, node-color pairs $\CC$ that represent the color for each node, and a pair of nodes $(s,t) \in V$, a \emph{color \st node cut} is a set of colors $C_c^{st}$ such that the nodes covered by colors $C_c^{st}$ contain an \st node cut. 

  A \emph{minimum color \st node cut} is a color \st node cut $C_{c \min}^{st}$ that has the minimum number of colors. The number of colors in $C_{c \min}^{st}$ is the \emph{value} of the minimum color \st node cut.
\end{defn}


Colored graph provides an intuitive representation of the correlated node failures using color. If every demand node has a single supply node, then every demand node has a color that corresponds to its supply node. After the failure of a supply node, a demand node fails if it has the color that corresponds to the supply node.

In general, a demand node can have multiple supply nodes, and thus the mapping to a colored graph is not straightforward. We propose Algorithm \ref{al:tran} that transforms the demand network to a colored graph where every node has a single color, and use Fig. \ref{fig:transformation} to illustrate the algorithm. 

\begin{algorithm}[h]
\caption{Transformation from the demand graph $G_1$ to a colored graph $\tilde G_1$.}
\label{al:tran}
\begin{enumerate}
\item If a node $v_i \in G_1$ has $n_s(v_i)$ supply nodes, $n_s(v_i)$ copies of $v_i$ exist in $\tilde G_1$. Each copy has a color which identifies a supply node. No edge exists between the copies of $v_i$.
\item If $v_i$ and $v_j$ are connected by an edge in $G_1$, then all the copies of $v_i$ are connected to all the copies of $v_j$ in $\tilde G_1$.
\end{enumerate}
\end{algorithm}

\begin{figure}[h]
\begin{centering}
\leavevmode\includegraphics[width=\linewidth]{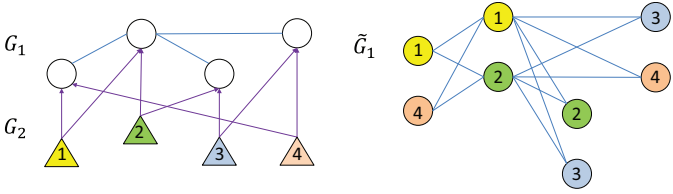}
\caption{Illustration of the transformation algorithm. }
\label{fig:transformation}
\end{centering}
\end{figure}

We study the connectivity of the demand graph based on the colored graph, due to the following theorem.
\begin{thm} \label{th:mapping}
There is a one-to-one mapping between a supply node cut in the demand network and a color node cut in the transformed graph of the demand network.
\end{thm}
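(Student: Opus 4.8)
The plan is to prove the claimed bijection by explicitly constructing maps in both directions and verifying they are mutual inverses. First I would set up the correspondence between supply nodes in $G_2$ and colors in $\tilde G_1$: by step~1 of Algorithm~\ref{al:tran}, each copy of a demand node $v_i$ carries a color that is identified with a specific supply node of $v_i$, so the color set $\CC$ of $\tilde G_1$ is naturally in bijection with the set of supply nodes $V_2$ (or the subset of $V_2$ that actually supplies something). A set $V_s \subseteq V_2$ of supply nodes thus corresponds to a set of colors $C_c$, and removing $V_s$ ``kills'' exactly the copies in $\tilde G_1$ whose color lies in $C_c$.

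The core of the argument is to show that $V_s$ is a supply node cut of $G_1$ if and only if $C_c$ is a color node cut of $\tilde G_1$. For this I would establish the key structural fact about the transformation: for any subset $W \subseteq V_1$, the subgraph of $\tilde G_1$ induced on the copies of the nodes in $W$ is connected if and only if the induced subgraph $G_1[W]$ is connected (and likewise the number of connected components is preserved), because by step~2 all copies of adjacent nodes are fully joined and no copy of $v_i$ is adjacent to another copy of $v_i$ — so the copies of $v_i$ behave, up to twinning, exactly like the single vertex $v_i$. Now observe that removing $V_s$ from the demand network causes a demand node $v$ to fail precisely when \emph{all} of its supply nodes lie in $V_s$, i.e.\ precisely when \emph{all} copies of $v$ in $\tilde G_1$ have colors in $C_c$, i.e.\ precisely when $v$ (viewed via its copies) is entirely removed when we delete all vertices with colors in $C_c$ from $\tilde G_1$. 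If some but not all supply nodes of $v$ are in $V_s$, then $v$ survives in $G_1$ and at least one copy of $v$ survives in $\tilde G_1$; by the twinning remark that surviving copy plays the role of $v$. Hence the set of surviving demand nodes $V_1 \setminus V_d$ corresponds exactly to the set of demand nodes having at least one surviving copy in $\tilde G_1 \setminus \{\text{colors } C_c\}$, and by the component-preservation fact, $G_1[V_1 \setminus V_d]$ is disconnected (or trivial) if and only if $\tilde G_1$ with colors $C_c$ removed is disconnected (or trivial). By Definition~\ref{def:colorglobal} and the definition of supply node cut, this is exactly the statement that $V_s$ is a supply node cut iff $C_c$ is a color node cut.

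Finally I would note that the map $V_s \mapsto C_c$ and its inverse $C_c \mapsto V_s$ (send a color back to the supply node it identifies) are inverse bijections on the respective power sets, and they restrict to a bijection between supply node cuts and color node cuts, with cardinalities preserved ($|V_s| = |C_c|$); this is the ``one-to-one mapping'' asserted. One minor care point: the definition of supply node cut requires that $V_d$ \emph{contains} a node cut (a superset of a cut), not that $V_d$ is itself a cut — but this matches Definition~\ref{def:colorglobal} verbatim, which asks that the nodes covered by $C_c$ contain a node cut, so no extra work is needed. The main obstacle, and the step deserving the most care, is the twinning/component-preservation lemma for the transformation: one must argue cleanly that fully-joined ``twin'' copies of each vertex neither create nor destroy connectivity relative to the original graph $G_1$, including the boundary case where only trivially many vertices remain. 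Everything else is bookkeeping on top of that lemma.
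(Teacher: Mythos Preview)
Your structural approach via a component-preservation/twin lemma is sound in spirit and arguably cleaner than the paper's direct case analysis, but there is a genuine gap in the final step. You establish
\[
(*)\quad G_1[V_1\setminus V_d]\ \text{disconnected or trivial}\ \Longleftrightarrow\ \tilde G_1\setminus\{C_c\text{-colored nodes}\}\ \text{disconnected or trivial},
\]
and then assert that $(*)$ ``is exactly the statement that $V_s$ is a supply node cut iff $C_c$ is a color node cut.'' It is not. Both definitions require the removed set to \emph{contain} a node cut, i.e.\ that \emph{some subset} of the removed nodes already disconnects or trivializes the graph; this is strictly stronger than what $(*)$ tests, which is only what happens when \emph{all} of $V_d$ (resp.\ all $C_c$-colored nodes) are deleted. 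Concretely, take $G_1$ to be the path $a\text{--}b\text{--}c\text{--}d\text{--}e$ with one distinct supply node per vertex and $V_s=\{s_c,s_d,s_e\}$, so $V_d=\{c,d,e\}$. Then $G_1\setminus V_d=\{a,b\}$ is connected and non-trivial, so both sides of $(*)$ are false; yet $V_d$ contains the node cut $\{c\}$, so $V_s$ \emph{is} a supply node cut and $C_c$ \emph{is} a color node cut. Your remark that ``both definitions use \emph{contains}, so no extra work is needed'' is exactly where the argument breaks: matching wording does not mean $(*)$ captures the quantifier over subsets.

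The fix uses the tools you already set up. Apply your lemma not to $W=V_1\setminus V_d$ but to an arbitrary $V_d'\subseteq V_d$: a node cut $V_d'$ of $G_1$ lifts to the set of all copies of $V_d'$, which is a subset of the $C_c$-colored nodes and (by the lemma) a node cut of $\tilde G_1$. Conversely, given $S\subseteq\{C_c\text{-colored}\}$ with $\tilde G_1\setminus S$ disconnected or trivial, set $V_d'=\{v:\text{every copy of }v\text{ lies in }S\}\subseteq V_d$; by your twinning observation $\tilde G_1\setminus S$ has the same connectivity as $\tilde G_1$ minus all copies of $V_d'$, and then the lemma yields that $G_1\setminus V_d'$ is disconnected or trivial. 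This is essentially what the paper does, though the paper argues each direction by an explicit case split (disconnected vs.\ single remaining node) rather than via a reusable structural lemma; your packaging would be tidier once the subset quantifier is handled correctly.
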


\begin{proof}

Let $G_1$ be the demand graph and $G_2$ be the supply graph.
Let $\tilde G_1$ be the transformed graph of $G_1$ by Algorithm \ref{al:tran}. The result trivially holds if every demand node has a single supply node. Next we focus on the case where a demand node has more than one supply node.

We first prove that given any supply node cut $V_s$ of $G_1$, there exists a color node cut $C_c$ of $\tilde G_1$ where colors $C_c$ correspond to supply nodes $V_s$. According to the definition of a supply node cut, the demand nodes in $G_1$ that have no supply nodes other than $V_s$ contain a node cut $V^*_d$ of $G_1$. By removing $V^*_d$ from $G_1$, either $G_1$ is separated into at least two components, or a single node $v$ in $G_1$ remains (by the definition of a node cut for a graph). In the first case, nodes in $\tilde G_1$ that correspond to $V^*_d \subseteq G_1$ have colors in $C_c$ and they are removed. Among the remaining nodes, if no edge exists between two nodes in $G_1$, then there is no edge between their corresponding nodes in $\tilde G_1$. Therefore, the remaining nodes in $\tilde G_1$ are disconnected after removing the nodes that correspond to $V_d^*$ and have colors $C_c$.
In the second case, copies of $v$ are the only remaining nodes in $\tilde G_1$ and they are disconnected. Thus, $C_c$ is a color node cut in $\tilde G_1$ in both cases.

We then prove that given any color node cut $C_c$ of $\tilde G_1$, there exists a supply node cut $V_s$ of $G_1$ where $V_s$ corresponds to colors $C_c$. After removing all (or a subset) of nodes in $\tilde G_1$ that have colors $C_c$, either a single node remains in $\tilde G_1$, or $\tilde G_1$ is separated into multiple connected components. In the first case, at most a single node remains in $G_1$ after removing $V_s$, and thus $V_s$ is a supply node cut. In the second case, if every component contains a single node, and the node corresponds to the same node in $G_1$, then at most one node survives in $G_1$ by removing supply nodes $V_s$. On the other hand, if these components correspond to different nodes in $G_1$, there must exist two disconnected nodes $v_1, v_2 \in G_1$, whose copies are in different components in $\tilde G_1$. (Recall that, if two nodes are connected in $G_1$, then their copies are connected in $\tilde G_1$. If all the remaining nodes in $G_1$ form a connected component, then their corresponding copies in $\tilde G_1$ also form a connected component.) In both cases, $V_s$ is a supply node cut of $G_1$.
\end{proof}



Moreover, an \st supply node cut can be represented by a color $\tilde{s} \tilde{t}$ node cut in the colored graph, where $\tilde{s}$ is any copy of $s$ and $\tilde{t}$ is any copy of $t$.
By considering cuts that separate $(s,t)$ in $G_1$ and cuts that separate $(\tilde s,\tilde t)$ in $\tilde G_1$, we obtain the following result by a similar proof to that of Theorem \ref{th:mapping}.
\begin{crl}
\label{th:mappingst}
There is a one-to-one mapping between a supply node \st cut in the demand network and a color $\tilde{s} \tilde{t}$ node cut in the transformed graph of the demand network, where $\tilde{s}$ is any copy of $s$ and $\tilde{t}$ is any copy of $t$.
\end{crl}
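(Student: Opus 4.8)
The plan is to transport paths back and forth between $G_1$ and its transformed graph $\tilde G_1$, exactly as in the proof of Theorem~\ref{th:mapping}, but tracking an \st path rather than global connectivity. Algorithm~\ref{al:tran} gives a bijection between the supply nodes of $G_2$ and the colors appearing in $\tilde G_1$, hence a bijection between sets $V_s^{st} \subseteq V_2$ and color sets $C_c^{st}$; I fix this correspondence and also fix arbitrary copies $\tilde s$ of $s$ and $\tilde t$ of $t$. I will use two elementary consequences of Algorithm~\ref{al:tran}: (i) $u$ and $v$ are adjacent in $G_1$ iff every copy of $u$ is adjacent to every copy of $v$ in $\tilde G_1$ --- so all copies of a vertex share the same neighborhood, and a vertex of $G_1$ lies in the set $V_d^{st}$ of demand nodes having no supply node outside $V_s^{st}$ iff all of its copies carry colors in $C_c^{st}$; and (ii) replacing each copy by its original turns any walk in $\tilde G_1$ into a walk in $G_1$. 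The goal is to show $V_s^{st}$ is an \st supply node cut of $G_1$ iff $C_c^{st}$ is a color $\tilde s\tilde t$ node cut of $\tilde G_1$.

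For the forward direction, suppose $V_s^{st}$ is an \st supply node cut, so $V_d^{st}$ contains an \st node cut $V_d^{*}$ of $G_1$, meaning $G_1 \setminus V_d^{*}$ has no \st path and $s,t \notin V_d^{*}$. Every copy of a vertex of $V_d^{*}$ carries a color in $C_c^{st}$, and no copy of $s$ or $t$ is such a copy. If, after deleting all copies of vertices of $V_d^{*}$, an $\tilde s$--$\tilde t$ path survived in $\tilde G_1$, projecting it into $G_1$ by (ii) would produce an \st walk disjoint from $V_d^{*}$, contradicting that $V_d^{*}$ is an \st node cut. Hence the copies of $V_d^{*}$ form an $\tilde s\tilde t$ node cut covered by $C_c^{st}$, so $C_c^{st}$ is a color $\tilde s\tilde t$ node cut.

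For the reverse direction, suppose $C_c^{st}$ is a color $\tilde s\tilde t$ node cut, witnessed by an $\tilde s\tilde t$ node cut $S$ of $\tilde G_1$ that is covered by $C_c^{st}$ and avoids $\tilde s,\tilde t$. I would show that $V_d^{st} \setminus \{s,t\}$ is an \st node cut of $G_1$, which is exactly the statement that $V_s^{st}$ is an \st supply node cut. If it were not, there would be an \st path $P = (s = w_0, w_1, \dots, w_k = t)$ in $G_1$ with $w_1, \dots, w_{k-1} \notin V_d^{st}$; here $k \ge 2$ because $s,t$ are non-adjacent, and $w_1, \dots, w_{k-1}$ are distinct from $s,t$. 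Then each $w_i$ ($1 \le i \le k-1$) has a copy $\tilde w_i$ with color outside $C_c^{st}$, so by (i) the walk $(\tilde s, \tilde w_1, \dots, \tilde w_{k-1}, \tilde t)$ exists in $\tilde G_1$; its interior vertices avoid $S$ (their colors lie outside $C_c^{st}$ while $S$ is covered by $C_c^{st}$) and its endpoints avoid $S$ by assumption, so $\tilde s$ and $\tilde t$ remain connected in $\tilde G_1 \setminus S$, a contradiction. Thus $V_d^{st}$ contains the \st node cut $V_d^{st} \setminus \{s,t\}$. Since nothing in either direction used the particular choice of $\tilde s,\tilde t$, the equivalence holds for every pair of copies, which in passing shows the family of color $\tilde s\tilde t$ node cuts is independent of that choice.

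The only genuine subtlety --- and the sole departure from the proof of Theorem~\ref{th:mapping} --- is the bookkeeping around $s$ and $t$: the demand-node failures $V_d^{st}$ caused by $V_s^{st}$ may include $s$ and/or $t$, whereas an $\tilde s\tilde t$ node cut must exclude $\tilde s$ and $\tilde t$, so the equivalence must be stated through ``$V_d^{st}$ contains an \st node cut'' (\ie after deleting $s,t$), and it is precisely non-adjacency of $s,t$ --- giving $k \ge 2$ with interior vertices different from $s,t$ --- that lets the lifted walk be routed through the prescribed $\tilde s,\tilde t$. The delicate cases of Theorem~\ref{th:mapping}, where every surviving component collapses to copies of a single vertex or the graph becomes trivial, do not arise here, since removing an $\tilde s\tilde t$ cut always leaves both $\tilde s$ and $\tilde t$ behind.
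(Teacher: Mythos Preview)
Your proposal is correct and follows essentially the same approach as the paper, which simply defers the corollary to ``a similar proof to that of Theorem~\ref{th:mapping}'' by considering cuts that separate $(s,t)$ in $G_1$ and $(\tilde s,\tilde t)$ in $\tilde G_1$. The only cosmetic difference is that you phrase the argument in terms of lifting and projecting \st paths, whereas the paper's proof of Theorem~\ref{th:mapping} is written in terms of connected components; these are dual views of the same cut argument, and your path-based phrasing is arguably the more natural one for the \st setting since the trivial-remaining-graph cases of Theorem~\ref{th:mapping} genuinely disappear, as you note.
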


Another corollary is a property of the transformed graph $\tilde G_1$ when every demand node in $G_1$ has a fixed number $n_s$ of supply nodes. If $G_1$ has $n_1$ nodes and $m_1$ edges, the transformed graph $\tilde G_1$ has $n_1 n_s$ nodes and $m_1 n_s^2$ edges. Moreover,
\begin{crl}
\label{th:cutvalue}
If every demand node has a fixed number $n_s$ of supply nodes, the following results hold.

If the node connectivity of $G_1$ is $k_1$, then the node connectivity of $\tilde G_1$ is $k_1 n_s$.

If the \st node connectivity is $k^{st}_1$ ($s,t \in G_1$), then the $\tilde{s} \tilde{t}$ node connectivity is $k^{st}_1 n_s$, where $\tilde{s} \in \tilde G_1$ is any copy of $s$ and $\tilde{t} \in \tilde G_1$ is any copy of $t$.
\end{crl}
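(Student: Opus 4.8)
The one fact that drives everything is that $\tilde G_1$ is exactly the \emph{blow-up} of $G_1$: each vertex is replaced by an independent set of $n_s$ copies, and two copies are joined precisely when the vertices they copy are adjacent in $G_1$ (this is immediate from Algorithm \ref{al:tran}). I would prove both claims by sandwiching the connectivity between matching upper and lower bounds, obtaining the upper bounds by blowing up a minimum (\st) node cut of $G_1$ and the lower bounds from Menger's theorem. I would prove the \st statement first, since the global statement reuses it.

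\emph{The \st statement.} Fix copies $\tilde s,\tilde t$ of the non-adjacent vertices $s,t$. For the upper bound, let $S$ be a minimum \st node cut of $G_1$, so $|S|=k_1^{st}$ and $s,t\notin S$, and let $\tilde S$ be the set of all $n_s|S|$ copies of vertices of $S$, so $\tilde s,\tilde t\notin\tilde S$. Mapping each copy to the vertex of $G_1$ it copies turns any $\tilde s$--$\tilde t$ path in $\tilde G_1$ into an $s$--$t$ walk in $G_1$ (consecutive copies on a path have adjacent, hence distinct, originals), and every such walk meets $S$; so the path meets $\tilde S$, making $\tilde S$ an $\tilde s\tilde t$ node cut and the $\tilde s\tilde t$ connectivity at most $k_1^{st}n_s$. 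For the lower bound, take $k_1^{st}$ internally vertex-disjoint $s$--$t$ paths in $G_1$ (Menger); from each such path $P$ and each $r\in\{1,\dots,n_s\}$ build the $\tilde s$--$\tilde t$ path in $\tilde G_1$ that replaces each internal vertex of $P$ by its $r$-th copy (valid by the construction of $\tilde G_1$, and using that $s,t$ nonadjacent forces $P$ to have an internal vertex). These $k_1^{st}n_s$ paths are pairwise internally vertex-disjoint — different $r$ use different copies, different $P$ use copies of disjoint vertex sets — so Menger gives $\tilde s\tilde t$ connectivity at least $k_1^{st}n_s$, hence equality.

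\emph{The global statement.} The upper bound is the same construction: if $S$ is a minimum node cut of $G_1$, its $n_sk_1$ copies form a node cut of $\tilde G_1$, since deleting them leaves the blow-up of $G_1-S$, and the blow-up of a disconnected graph is disconnected while the blow-up of a single vertex is a set of $n_s$ isolated vertices (disconnected when $n_s\ge2$, trivial when $n_s=1$) — in either case a node cut. For the lower bound, the case $n_s=1$ is trivial since then $\tilde G_1=G_1$, so assume $n_s\ge2$; then $\tilde G_1$ is not complete and its node connectivity equals $\min\{\kappa_{uv}(\tilde G_1):u\not\sim v\}$. If $u,v$ are copies of the same vertex $z$ of $G_1$, they share as common neighbors all $n_s\deg_{G_1}(z)$ copies of $z$'s neighbors, so $\kappa_{uv}(\tilde G_1)\ge n_s\deg_{G_1}(z)\ge n_s\kappa(G_1)=n_sk_1$ by the min-degree bound $\kappa\le\delta$ (this subcase also handles the case that $G_1$ is complete, where every non-adjacent pair in $\tilde G_1$ is of this form). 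If instead $u,v$ are copies of distinct vertices $u',v'$, then $u'\not\sim v'$ in $G_1$ and the \st statement gives $\kappa_{uv}(\tilde G_1)=n_s\kappa_{u'v'}(G_1)\ge n_sk_1$. Thus the node connectivity of $\tilde G_1$ is exactly $n_sk_1$.

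I expect the genuine content to be minimal and the only care needed in the corner cases: $n_s=1$; $G_1-S$ (or $G_1$ itself) reducing to a single vertex; and $G_1$ complete. The point worth being explicit about — as the Remark after the definition of supply node cut already hints — is that a set of $n_s\ge2$ mutually non-adjacent vertices must be read as ``disconnected,'' so that blowing up a trivial leftover graph still certifies a node cut; and that routing the global lower bound through $\kappa\le\delta$ for same-vertex copies is what avoids a separate treatment of complete $G_1$.
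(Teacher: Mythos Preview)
Your proof is correct but takes a genuinely different route from the paper's. You argue directly from the blow-up structure: blow up a minimum cut for the upper bound, blow up Menger paths for the \st lower bound, and handle the global lower bound by a case split on whether two non-adjacent copies come from the same original vertex (common-neighbor count and $\kappa\le\delta$) or different ones (reduce to the \st case). The paper instead leverages the machinery it has just built: since the topology of $\tilde G_1$ depends only on $n_s$ and not on which supply nodes are chosen, it assigns every demand node $n_s$ \emph{globally distinct} supply nodes; under that choice every color in $\tilde G_1$ appears on exactly one node, so the minimum color node cut coincides with the ordinary node connectivity of $\tilde G_1$, and Theorem~\ref{th:mapping} identifies it with the supply node connectivity of $G_1$, which by this construction is trivially $k_1 n_s$. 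Your argument is self-contained, handles the corner cases explicitly, and yields explicit disjoint-path systems in $\tilde G_1$; the paper's is a two-line deduction once Theorem~\ref{th:mapping} is in hand, but it relies on that correspondence and on the freedom to re-color $\tilde G_1$ without changing its underlying graph.
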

\begin{proof}
  By assigning $n_s$ distinct supply nodes to each node in $G_1$, using a total of $n_1 n_s$ supply nodes, to remove a node in $G_1$, a distinct set of $n_s$ supply nodes must be removed. Thus, the supply node connectivity of $G_1$ equals the node connectivity of $G_1$ times $n_s$. Moreover, in $\tilde G_1$, every node has a distinct color, and the number of colors in a color node cut equals the number of nodes in the same node cut. Thus, the node connectivity of $\tilde G_1$, without considering colors, equals the supply node connectivity of $G_1$, because of the one-to-one mapping proved in Theorem \ref{th:mapping}. We have therefore proved that the node connectivity of $\tilde G_1$ is the node connectivity of $G_1$ times $n_s$. The same relationship holds for \st node connectivity in $G_1$ and $\tilde{s} \tilde{t}$ node connectivity in $\tilde G_1$.
\end{proof}

\subsection{Notations}
We define notations to be used throughout the rest of the paper. For a finite set $X$, the cardinality of $X$ is denoted by $|X|$. For a colored graph $G(V,E,\CC)$, the number of nodes, edges, and colors are denoted by $n,m,n_c$, respectively. The graph connectivity is denoted by $k$, and the \st connectivity is denoted by $k^{st}$. The subscript $i \in \{1,2\}$ denotes the identity of a graph. For example, $n_1$ denotes the number of nodes in $G_1$. The subscript $s$ denotes \emph{supply}. For example, $n_{s1}$ denotes the number of supply nodes for a node in $G_1$.

We use asymptotic notations in this paper. Let $f(x) > 0$ and $g(x) > 0$ be two functions. If there exists a constant $M$ and a positive number $x_0$, such that $f(x) \leq M g(x)$ for all $x \geq x_0$, then $f(x) = O(g(x))$. Moreover, $f(x) = \Omega(g(x))$ if $g(x) = O(f(x))$; $f(x) = \Theta(g(x))$ if both $f(x) = O(g(x))$ and $f(x) = \Omega(g(x))$; $f(x) = o(g(x))$ if $\lim_{x\rightarrow \infty} f(x) / g(x) = 0$; $f(x) = \omega(g(x))$ if $g(x) = o(f(x))$.

\section{Evaluation of the supply node connectivity}
\label{sc:evaluate}
In this section, we study the supply node connectivity of the demand network. As discussed in the previous section, supply node cuts in the demand network are equivalent to color node cuts in a colored graph. To simplify the presentation, we consider a colored graph $G(V,E,\CC)$ throughout this section.
\subsection{Complexity}
We prove that computing both the global minimum color node cut of a graph and the minimum color \st node cut are NP-hard.
The proof for the complexity of the minimum color \st node cut follows a similar approach to that of the minimum color \st edge cut in \cite{coudert2007shared}. In contrast, the complexity of the global minimum color edge cut is unknown. The detailed proofs of Theorems \ref{th:np} and \ref{th:npst} can be found in the appendix.

\begin{thm}\label{th:np}
  Given a colored graph, computing the value of the global minimum color node cut is NP-hard.
\end{thm}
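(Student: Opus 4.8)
The plan is to reduce from a known NP-hard problem whose structure naturally encodes ``a set of colors whose removal disconnects a graph.'' A clean candidate is the minimum color $st$ edge/node cut, but since the global version is what we need, I would instead reduce from a set-cover-flavored or vertex-cover-flavored problem, or — most directly — from the \emph{minimum color $st$ node cut} itself (Theorem~\ref{th:npst}), by a gadget that forces any global color node cut to behave like an $st$ cut. Concretely, given an instance $G(V,E,\CC)$ with terminals $s,t$ of the color $st$ node cut problem, I would build $G'$ by adding two new \emph{uncolored} (or uniquely-colored) cliques $K_s$ and $K_t$, each of size larger than $n_c$, attaching $K_s$ to $s$ and $K_t$ to $t$, so that no global color node cut can afford to break either clique or to isolate a single vertex; hence any minimum global color node cut of $G'$ must separate the $K_s$-side from the $K_t$-side, which (because the cliques are internally uncuttable and uniquely colored) is exactly a color $st$ node cut of the original instance. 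This gives value-preserving equivalence and establishes NP-hardness of the global problem assuming Theorem~\ref{th:npst}.

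Alternatively, and perhaps more self-containedly, I would reduce directly from a classical NP-hard problem such as \textsc{Minimum Vertex Cover} or \textsc{Hitting Set}/\textsc{Set Cover}. First I would take a \textsc{Set Cover} instance with universe $U = \{e_1,\dots,e_m\}$ and sets $S_1,\dots,S_p$; create one ``universe vertex'' $v_e$ for each $e \in U$ and give vertex $v_e$ a set of copies, one copy per set $S_j$ containing $e$, with the copy colored by color $j$ (this is exactly the colored-graph picture arising from Algorithm~\ref{al:tran} with multiple supply nodes). Then I would wire the universe vertices together with a fixed ``frame'' graph — say, place all the $v_e$'s along a path or between two distinguished nodes $a,b$ — so that disconnecting the frame requires killing \emph{every} universe vertex, i.e.\ for each $e$ choosing at least one color $j$ with $e \in S_j$. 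A minimum color node cut then corresponds to a minimum collection of sets covering $U$. The main design work is choosing the frame so that (i) the only cheap cuts are the ones that hit all universe vertices, and (ii) trivial-graph cuts and degenerate cuts are ruled out; padding each non-universe vertex with a large uniquely-colored clique handles both, as in the first approach.

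The key steps, in order: (1) fix the source problem and the exact colored-graph instance it produces; (2) build the frame/clique gadget and prove the ``armoring'' lemma — no minimum color node cut of $G'$ can use colors inside a padding clique, nor can it leave only a single vertex, so every minimum color node cut is a genuine separating cut between the two designated sides; (3) show the forward direction, that a small set cover (resp. small color $st$ cut) yields a color node cut of $G'$ of the same size; (4) show the reverse direction, that any color node cut of $G'$ of size $\le k$ projects back to a feasible solution of size $\le k$ in the source instance; (5) observe the reduction is polynomial-time.

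The main obstacle I anticipate is step~(2): the subtlety flagged in the paper's own remarks around Fig.~\ref{NodeCut} — a color node cut need only \emph{contain} a proper node cut, and removing the covered vertices may leave the graph connected yet still count as a cut if it includes a cut-superset, or may leave a single trivial vertex. So I must make sure the gadget is robust to these definitional corner cases: the padding cliques must be large enough that (a) no color node cut can reduce the graph to a single surviving vertex more cheaply than the intended cut, and (b) the ``superset of a cut'' loophole cannot be exploited to certify a cheaper cut than the intended set-cover/$st$-cut solution. Getting the clique sizes and the frame adjacency exactly right so that the optimum of the global color node cut problem coincides \emph{on the nose} with the optimum of the source problem is where the real care is required; everything else is routine.
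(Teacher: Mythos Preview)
Your primary approach---reducing from the color $st$ node cut by hanging large uniquely-colored cliques $K_s,K_t$ off $s$ and $t$---does not work as stated. If $K_s$ is joined to the rest of $G'$ only through $s$, then $s$ itself is a cut vertex of $G'$, and the single color of $s$ is already a global color node cut of value~$1$, independent of the $st$ instance. Even if you repair this locally (e.g.\ by making every vertex of $K_s$ a twin of $s$, adjacent to all of $N_G(s)$), the deeper problem remains: the input graph $G$ may admit cheap \emph{global} color node cuts that have nothing to do with separating $s$ from $t$. In particular, for the very instances produced by the hitting-set reduction of Theorem~\ref{th:npst} (parallel $st$-paths), any internal path vertex $v$ can be isolated by removing the colors of its two path-neighbours---possibly a single color---so the global minimum color node cut of $G'$ can be $1$ or $2$ while the minimum color $st$ cut is arbitrarily large. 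Armoring only $s$ and $t$ cannot prevent this; you would have to armor every vertex of $G$, which destroys the color structure you are trying to preserve. Your second approach has a related issue: with the path frame $a\text{--}v_{e_1}\text{--}\cdots\text{--}v_{e_m}\text{--}b$ one kills a \emph{single} $v_e$ to disconnect, not all of them; and with the multi-copy encoding you describe, killing $v_e$ requires removing \emph{every} color $j$ with $e\in S_j$, so ``kill all $v_e$'' means removing all $p$ colors, not a set cover.

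The paper's proof sidesteps exactly this difficulty by building the whole graph out of uncuttable pieces rather than trying to armor an arbitrary instance. It reduces from \textsc{Vertex Cover} on $G'(V',E')$: take four cliques of size $m'=|E'|$ arranged in a ring, with consecutive cliques joined by perfect matchings of size $m'/2$. Two of the four matchings, say $M_1$ and $M_2$, together carry $m'$ edges; the $i$-th such matching edge is assigned endpoint colors $u',v'$ where $(u',v')$ is the $i$-th edge of $G'$. All other nodes receive distinct fresh colors. Because the cliques themselves cannot be cut with fewer than $m'-1$ colors, any small color node cut must sever the ring at two matchings; severing an ``unlabeled'' matching already costs $m'/2$ fresh colors, so the minimum color node cut is forced to sever $M_1$ and $M_2$. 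Severing $M_1\cup M_2$ means, for each edge $(u',v')\in E'$, removing color $u'$ or color $v'$---precisely a vertex cover of $G'$. The point is that the cliques are the \emph{body} of the construction, not appendages, so there is no stray cheap cut anywhere; this is the missing structural idea in your sketch.
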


\begin{thm} \label{th:npst}
  Given a colored graph and a pair of nodes $(s,t)$, computing the value of the minimum color $st$ node cut is NP-hard.
\end{thm}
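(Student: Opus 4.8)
The plan is to reduce from a known NP-hard problem — most naturally the minimum color \emph{st} edge cut problem for colored graphs with colored edges, whose NP-hardness is established in \cite{coudert2007shared}, or alternatively directly from 3-SAT / minimum set cover, since the structure of a color \emph{st} node cut ("pick a set of colors so that the covered nodes separate $s$ from $t$") has an obvious set-cover flavor. Given an instance of the minimum color \emph{st} edge cut problem on a graph $H$ with colored edges and a source-destination pair $(s,t)$, I would build a colored graph $G$ with colored nodes by the standard edge-to-node gadget: subdivide each edge $e$ of $H$ with a new node $w_e$ carrying the color of $e$, and give every original node of $H$ a fresh private color (or, more cleanly, make the original nodes uncolorable by giving each its own unique color so that it is never cheaper to remove an original node than to remove a subdividing node). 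The source and destination of the new instance are $s$ and $t$ themselves.

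The key steps, in order, are: (i) describe the gadget precisely and verify that $G$ has size polynomial in $|H|$; (ii) show the forward direction — any color \emph{st} edge cut $C$ in $H$ of size $\ell$ yields a color \emph{st} node cut in $G$ of size $\ell$, namely the same set of colors $C$: removing the nodes $\{w_e : \text{color}(e)\in C\}$ deletes exactly the subdivision vertices of the cut edges, which disconnects $s$ from $t$ in $G$ because every $s$–$t$ path in $G$ projects to an $s$–$t$ walk in $H$ that must traverse some edge of the edge cut; (iii) show the reverse direction — any color \emph{st} node cut $C'$ in $G$ of size $\ell$ can be converted, without increasing the number of colors, into a color \emph{st} edge cut in $H$ of size at most $\ell$: the private colors on original nodes are "expensive" so we may assume $C'$ uses only colors appearing on subdivision nodes (if $C'$ does remove an original node $v$, replace its private color by the colors of all edges incident to $v$ in $H$ — but this could blow up the count, so instead one argues that removing $v$ is equivalent to, and can be simulated by, choosing the color of one incident subdivided edge on each side, or one simply sets up the gadget so original nodes carry no removable color at all, e.g.\ by making each original node a clique-blown-up bundle immune to a single color). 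Finally (iv) conclude that the optimal values coincide, so computing the value of the minimum color \emph{st} node cut is at least as hard as the known NP-hard problem, and membership-in-NP style remarks (this is a hardness claim, so NP membership is not required, but one notes the decision version is in NP).

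**The main obstacle** is the reverse direction — ensuring that an optimal color \emph{st} node cut in $G$ never "cheats" by deleting original vertices of $H$ in a way that has no counterpart as an edge cut, or by exploiting the fact that one color may sit on many nodes. The clean fix is to make the original vertices of $H$ effectively un-deletable by a cheap color assignment: give every original vertex its own singleton color, so that a color \emph{st} node cut of value $\ell$ can w.l.o.g.\ be assumed to avoid these $|V(H)|$ private colors whenever $\ell$ is smaller than the true optimum forces it to be; equivalently, pad so that the edge-cut optimum is strictly less than $|V(H)|$ (always arrangeable, e.g.\ by taking many parallel color-disjoint copies). One must also handle the degenerate case excluded in the paper — $s,t$ non-adjacent — which the subdivision gadget automatically guarantees since every original edge, including any $st$ edge, becomes a length-two path through a colored node. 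Once these bookkeeping points are pinned down, the equivalence of optima is immediate and the theorem follows. I would present the argument in the appendix, reusing the edge-cut hardness of \cite{coudert2007shared} as a black box, exactly as the paper's remark before the theorem statement anticipates.
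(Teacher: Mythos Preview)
Your primary reduction from the minimum color $st$ edge cut has a real gap in the reverse direction, and the fix you commit to does not close it. The issue is exactly the one you flag: when each original vertex of $H$ carries its own private singleton color, removing one such color removes one vertex, and if that vertex is an $st$ cut vertex of $H$, the minimum color $st$ node cut of $G$ has value $1$ regardless of how large the minimum color $st$ edge cut of $H$ is. Concretely, let $H$ route $M$ edge-color-disjoint $st$ paths all through a single internal vertex $v$; the color edge cut of $H$ is at least $M$, yet in the subdivided $G$ the private color of $v$ alone is a color $st$ node cut. Your ``padding'' fix --- arranging that the edge-cut optimum is below $|V(H)|$ --- does nothing here: the difficulty is not that the optimum competes with $|V(H)|$ private colors in aggregate, but that a \emph{single} private color can already be a cut. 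The clique-blow-up you mention in passing would repair this (replace each original vertex by a clique larger than the number of edge colors, each clique node privately colored), but you neither commit to it nor work it out, and the resulting construction is heavier than necessary.

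The paper takes the route you list only parenthetically: a direct reduction from minimum hitting set. Given a universe $U$ and sets $S_1,\dots,S_p$, it builds $p$ node-disjoint $st$ paths, where path $i$ has one internal node per element of $S_i$, colored by that element. A set of colors is a color $st$ node cut iff every path loses a node, \ie, iff every $S_i$ is hit; both directions are immediate and there is no gadget pathology to manage. Since you already name set cover as a candidate source problem, the cleanest repair of your proposal is to drop the edge-cut detour and carry out that reduction instead.
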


Given the computational complexity, in the remainder of this section, we first develop integer programs to compute the exact values of the minimum color cuts, and then develop polynomial time approximation algorithms.
\subsection{Exact computation for arbitrary colored graphs}
We compute the minimum color \st node cut using a mixed integer linear program (MILP). In this formulation, each node has a \emph{potential}. Connected nodes have the same potential. The source and the destination are disconnected if they have different potentials. We note that the classical MILP formulation for computing the minimum \emph{edge cut} also uses node potentials to indicate disconnected components after removing edges \cite{bertsimas1997}.

In the MILP formulation, indicator variable $c_r$ denotes whether color $r \in C$ is in the minimum color cut, where $C$ is the set of colors in the colored graph. Indicator variable $y_v$ denotes whether node $v \in V$ is a cut node that separates the \st pair, and may take value 1 only if the color of $v$ is in the color cut. Note that $y_v$ may take value 0 even if the color of $v$ is in the color cut (constraint (\ref{milp:color})). This allows the cut nodes to be a \emph{subset} of nodes with colors $\{r | c_r = 1\}$ (recall Definition \ref{def:colorst}).

The potential of a node $v$ is denoted by $p_v$.
After removing all the cut nodes, the potentials of nodes in a connected component are the same, guaranteed by constraints (\ref{milp:cut1}) under the condition $y_i = y_j = 0$. The same constraints guarantee that nodes adjacent to the cut nodes may have different potentials from the cut nodes, if $y_i = 1$ or $y_j = 1$. The potential of the source is 0, and the potential of the destination is 1, guaranteed by constraint (\ref{milp:stpotential}). Moreover, constraint (\ref{milp:cutst}) guarantees that neither $s$ nor $t$ is a cut node. Thus, the component that contains $s$ and the component that contains $t$ are separated by an \st node cut. The objective is to minimize the number of colors of the cut nodes.
\begin{eqnarray}
\text{  min }   &&\sum_{r \in C} c_r ~~~~~~~~~~~~~~~~~~~~~~~~~~~~~~~~~~~~~~~~~~~~\text{(MILP)} \nonumber \\
\text{s.t.}
&& - y_i - y_j \leq p_i - p_{j} \leq y_i + y_j, ~~ \forall (i,j) \in E, \label{milp:cut1} \\
&& p_s = 0, p_t = 1, \label{milp:stpotential}\\
&& y_s = y_t = 0, \label{milp:cutst} \\
&& y_v \leq c_r, \forall r \in C, v \in \{v | r \text{ is the color of } v\}, \label{milp:color} \\
&& p_v, y_v \geq 0, ~~\forall v \in V, \nonumber \\
&& c_r \in \{0,1\}, ~~\forall r \in C  \nonumber.
\end{eqnarray}

Next we compute the global minimum color node cut of a colored graph using an integer program (IP). The variables $c,y,p$ have the same representations as those in the above MILP. 
Recall that a global node cut of a graph \emph{either} separates the remaining nodes into disconnected components, \emph{or} makes the remaining graph trivial. In the first case, $z = 0$, and constraint (\ref{milp:one1}) guarantees that there is at least one node with potential 1, in addition to all the cut nodes. Constraints (\ref{milp:cutpotential}) guarantee that all the cut nodes have potential 1. Constraint (\ref{milp:one0}) guarantees that there is at least one node that has potential 0. The existence of both potential 0 nodes and potential 1 nodes, excluding the cut nodes, implies that the remaining graph is disconnected. In the second case, $z = 1$, and the number of cut nodes is at least $|V| - 1$, guaranteed by constraint (\ref{milp:trivial}). Given that $M$ is sufficiently large (\eg, $M = 2|V|$), if $z = 0$, constraint (\ref{milp:trivial}) is satisfied; if $z = 1$, constraints (\ref{milp:one1}) and (\ref{milp:one0}) are satisfied. Thus, a node cut that satisfies either condition is a feasible solution of the following IP.

\begin{eqnarray}
\text{min} && \sum_{r \in C} c_r ~~~~~~~~~~~~~~~~~~~~~~~~~~~~~~~~~~~~~~~~~~~~~~~~~~\text{(IP)} \nonumber\\
\text{s.t.}
&& - y_i - y_j \leq p_i - p_{j} \leq y_i + y_j, ~~ \forall (i,j) \in E, \nonumber \\
&&  p_v \geq y_v, ~~ \forall v \in V, \label{milp:cutpotential}\\
&&  \sum_{v \in V} p_v -\sum_{v \in V} y_v - 1 \geq -Mz, \label{milp:one1} \\
&&  \sum_{v \in V} p_v - |V| + 1 \leq Mz, \label{milp:one0}\\
&&  \sum_{v \in V} y_v - |V| + 1 \geq -M(1 - z), \label{milp:trivial}\\
&& y_v \leq c_r, ~~ \forall r \in C, v \in \{v | r \text{ is the color of } v\}, \nonumber \\
&& c_r, p_v, y_v, z \in \{0,1\},  ~~ \forall v \in V, \forall r \in C. \nonumber
\end{eqnarray}

\subsection{A polynomially solvable case and an approximation algorithm} \label{sc:approx}
Although computing the minimum color node cut is NP-hard in general, there are special instances for which the value can be computed in polynomial time.
Let $V_i$ denote the nodes in $G$ that have color $i$. The \emph{induced graph} of $V_i$, denoted by $G[V_i]$, consists of $V_i$ and edges of $G$ that have both ends in $V_i$. We prove that if $G[V_i]$ is connected for all $i$, then the minimum color node cuts can be computed in polynomial time. It is worth noting that these special instances are reasonable representations for real-world interdependent networks, where a supply node is likely to support multiple directly connected nearby demand nodes.

Algorithm \ref{al:span1} computes the minimum color \st node cut in $G$ where $G[V_i]$ is connected $\forall i$, for any non-adjacent $(s,t)$ pair.
\begin{algorithm}[h]
\caption{Computation of the minimum color \st node cut in $G$ where $G[V_i]$ is connected $\forall i$.}
\label{al:span1}
\begin{enumerate}
\item Construct a new graph $G'$ from $G$ as follows. Contract the nodes $V_i$, which have the same color $i$, into a single node $u_i$. Connect $u_i$ and $u_j$ if and only if there is at least one edge between $V_i$ and $V_j$.
    Connect $s'$ to $\{u_i | s \text{ is connected to } V_i \}$, and connect $t'$ to $\{u_i | t \text{ is connected to } V_i \}$.
\item Compute the minimum $s't'$ node cut in $G'$, in which every node has a distinct color. The minimum color \st node cut in $G$ is given by the colors of the $s't'$ cut nodes in $G'$.
\end{enumerate}
\end{algorithm}

The following lemma proves the correctness of Algorithm~\ref{al:span1}.

\begin{lem} \label{th:span1}
 The $s't'$ node connectivity in $G'$ equals the value of the minimum color $st$ node cut in $G$, if $G[V_i]$ is connected,~$\forall i$, and $(s,t)$ are non-adjacent.
\end{lem}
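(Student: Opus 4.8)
The plan is to establish the two inequalities separately: (i) every color $st$ node cut in $G$ yields an $s't'$ node cut in $G'$ of no larger size, and (ii) every $s't'$ node cut in $G'$ yields a color $st$ node cut in $G$ of the same size. Together these give equality of the minimum values, which is exactly the claim.

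For direction (i), I would start from a minimum color $st$ node cut $C_c^{st}$ in $G$. By definition the nodes covered by the colors in $C_c^{st}$ contain an $st$ node cut $V_d^{st}$ of $G$. I would argue that the set $\{u_i : i \in C_c^{st}\}$ is an $s't'$ node cut in $G'$. Suppose not: then there is an $s'$–$t'$ path in $G'$ avoiding all $u_i$ with $i \in C_c^{st}$. Such a path has the form $s', u_{i_1}, u_{i_2}, \ldots, u_{i_\ell}, t'$ with none of the colors $i_1,\ldots,i_\ell$ in $C_c^{st}$. Using the construction of $G'$ — an edge $u_{i_j}u_{i_{j+1}}$ means there is at least one edge of $G$ between $V_{i_j}$ and $V_{i_{j+1}}$, and $s'u_{i_1}$ means $s$ is adjacent in $G$ to some node of $V_{i_1}$ — together with the hypothesis that each $G[V_{i_j}]$ is connected, I can stitch together an actual $s$–$t$ walk in $G$ that passes only through nodes whose colors lie outside $C_c^{st}$. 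Since $V_d^{st}$ is contained in the color-covered set, this walk (after extracting a path) avoids $V_d^{st}$, contradicting that $V_d^{st}$ is an $st$ node cut. Here the non-adjacency of $s$ and $t$ guarantees the path has length at least $2$ so the argument is not vacuous. Hence $\{u_i : i \in C_c^{st}\}$ separates $s'$ from $t'$ and has size $|C_c^{st}|$, so the $s't'$ node connectivity of $G'$ is at most the value of the minimum color $st$ node cut of $G$.

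For direction (ii), take a minimum $s't'$ node cut $U = \{u_{i_1},\ldots,u_{i_q}\}$ in $G'$, and set $C = \{i_1,\ldots,i_q\}$. I claim the nodes of $G$ covered by the colors in $C$ contain an $st$ node cut. Removing from $G$ all nodes with a color in $C$ leaves a subgraph whose connected components correspond exactly to the components of $G' - U$ (using again that each $G[V_i]$ is connected, so a color class is either entirely removed or stays internally connected, and that cross-class adjacency in $G$ is faithfully recorded by edges of $G'$); in particular $s$ and $t$ land in the components of $s'$ and $t'$ respectively, which are distinct since $U$ separates them. Thus the set of removed nodes is an $st$ node cut of $G$ covered by $C$, so $C$ is a color $st$ node cut of value $q$. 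Therefore the value of the minimum color $st$ node cut of $G$ is at most the $s't'$ node connectivity of $G'$. Combining with (i) gives the lemma.

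The main obstacle I anticipate is the bookkeeping in direction (i): carefully justifying that a walk in $G'$ avoiding certain contracted nodes can be lifted to a genuine $s$–$t$ path in $G$ avoiding the corresponding color classes, which is precisely where the connectivity of each $G[V_i]$ is essential (without it, entering a color class near one boundary node does not let you reach a different boundary node of the same class). A secondary subtlety is handling the degenerate outcomes of node-cut removal — the "trivial remaining graph" possibility does not arise for $st$ cuts since $s$ and $t$ both survive, but one must still be careful that $s$ or $t$ may itself carry a color in $C$; as the paper notes after Definition~\ref{def:colorst}, the induced failures may include $s$ or $t$, yet removing the cut restricted to $V \setminus \{s,t\}$ still disconnects the pair, and I would phrase the component argument in $G'$ so that $s'$ and $t'$ are never contracted and hence never removed.
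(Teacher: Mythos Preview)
Your proposal is correct and follows essentially the same two-direction argument as the paper's proof: a color $st$ cut in $G$ maps to an $s't'$ node cut in $G'$ of the same size, and conversely. You are in fact more explicit than the paper about where the connectedness of each $G[V_i]$ is used to lift a $G'$-path back to a $G$-walk in direction (i); for direction (ii) the paper spells out the case analysis when $s$ or $t$ carries a cut color (showing $V^{st}\setminus\{s,t\}$ still separates), which is exactly the subtlety you flag at the end, so your plan there matches the paper's treatment.
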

\begin{proof}
  We aim to prove that there is a one-to-one mapping between a color \st node cut in $G$ and an $s't'$ node cut in $G'$, from which the result follows.

  One direction is simple. Let $C$ be the set of colors that appear in $G$. For any \st color node cut $C^{st}_c$ in $G$, after removing all (or a certain subset) of nodes with colors in $C^{st}_c$, there does not exist a sequence of colored nodes that connect $s$ and $t$. Two nodes $u_i, u_j$ are connected in $G'$ only if nodes with color $i$ and nodes with color $j$ are connected in $G$. Thus, there does not exist a sequence of nodes with colors in $C \setminus C^{st}_c$ that connect $s'$ and $t'$ in $G'$.

  To prove the other direction, consider any $s't'$ node cut in $G'$ and denote it by $V^{s't'}$. Let $V^{st} \subseteq G$ be a set of nodes with colors in $C_\text{color} = \{i | u_i \in V^{s't'} \}$. We aim to prove that $V^{st}$ is a superset of an \st node cut in $G$.

  If $V^{st}$ does not contain $s$ or $t$, after removing $V^{st}$ from $G$, no edge exists between the component that contains $s$ and the component that contains $t$. To see this, note that if no edge exists between $u_i$ and $u_j$ in $G'$, then no edge exists between any color $i$ node and any color $j$ node in $G$.

  If $V^{st}$ contains $s$, we need to prove that $V^{st} \setminus s$ is an \st cut in $G$. In Step 1 of Algorithm \ref{al:span1}, $s'$ is connected to all neighbors $N(s') := \{u_i | s \text{ is connected to } V_i \}$ in $G'$. After removing $V^{s't'}$, $N(s')$ are either removed or disconnected from $t'$. Therefore, the neighbors of $s$ in $G$ are either removed or disconnected from $t$ after removing $V^{st} \setminus s$.

  The same analysis proves that if $V^{st}$ contains $t$, then $V^{st} \setminus t$ is an \st cut in $G$. Similarly, if $V^{st}$ contains both $s$ and $t$, then $V^{st} \setminus s,t$ is an \st cut in $G$. This concludes the proof that $V^{st}$ is a superset of an \st node cut in~$G$.
\end{proof}

\begin{rmk}
A similar result exists in the computation of the minimum color \st \emph{edge} cut under the condition that all the edges that have the same color are connected \cite{coudert2007shared}. The difference in our problem is that the source or destination may have the same color as the nodes in a cut. Thus, to prove that a set of colors $C_c^{st}$ is a color cut, we need to prove that removing nodes, excluding $s$ and $t$, with colors $C_c^{st}$ disconnects $s$ and $t$. Thus, the proof has to take care of multiple corner cases.
\end{rmk}

To compute the global minimum color node cut of a colored graph, it is necessary to consider two different cases, resulting from the definition of a node cut that allows the remaining graph to be either disconnected or reduced to a single node. Algorithm \ref{al:span1global} computes the exact value of the global minimum color node cut of $G$ where $G[V_i]$ is connected $\forall i$.
\begin{algorithm}[h]
\caption{Computation of the global minimum color node cut of $G$ where $G[V_i]$ is connected $\forall i$.}
\label{al:span1global}
\begin{enumerate}
\item Compute minimum color \st node cut $C_c^{st}$ for all non-adjacent \st pairs in $G$ by Algorithm \ref{al:span1}. Let $C_c^1$ denote the minimum size $C_c^{st}$ over all \st pairs. (The cut $C_c^1$ is the minimum color node cut of $G$ that partitions $G$ into more than one component.)
\item Compute the minimum set of colors $C_c^2$ that cover at least $n-1$ out of the $n$ nodes in $G$. (\Ie, if there exists a color $i$ that is carried by one node, then $C_c^2$ include all the colors except color $i$. If there is no color that is carried by a single node, then $C_c^2$ include all the colors.)
\item The minimum color node cut of $G$ is given by the smaller of $C_c^1$ and $C_c^2$.
\end{enumerate}
\end{algorithm}

We remark that the global minimum color node cut of $G$ \emph{can not} be computed by first contracting nodes that have the same color and then computing the global minimum node cut in the new graph, even if $G[V_i]$ is connected $\forall i$. We only claim that the minimum color \st node cut in $G$ corresponds to the $s't'$ node cut in $G'$ obtained by Algorithm \ref{al:span1}, and that the global minimum color node cut of $G$ can be computed by Algorithm \ref{al:span1global}. Note that the topology of $G'$ depends on the choice of $s$ and $t$ (see Step 1 of Algorithm \ref{al:span1}).




The above result can be used to develop an approximation algorithm to compute the minimum color node cuts in an arbitrary colored graph where the induced graph $G[V_i]$ is not necessarily connected. To approximate the value of the minimum color \st node cut, the algorithm is a slight modification of Algorithm \ref{al:span1}. Instead of contracting $G[V_i]$ into a single node, in the new algorithm, \emph{each connected component} of $G[V_i]$ is contracted into a single node. Let the new graph be $G''$, and connect $s'', t''$ to the nodes contracted by the components in $G$ that are connected to $s,t$, respectively. The performance of the algorithm is given by Lemma \ref{th:approx}.
\begin{lem} \label{th:approx}
The $s''t''$ node connectivity in $G''$ is at most $q$ times the value of the minimum color \st node cut in $G$, where $q$ is the maximum number of components of $G[V_i]$, $\forall i$.
\end{lem}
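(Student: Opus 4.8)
The plan is to exhibit an $s''t''$ node cut of $G''$ containing at most $qk$ vertices, where $k$ is the value of the minimum color \st node cut of $G$; since the $s''t''$ node connectivity is by definition the size of a smallest such cut, this immediately gives the claimed bound. First I would fix a minimum color \st node cut $C_c^{st}$ of $G$ and, invoking Definition~\ref{def:colorst}, an underlying \st node cut $V^{st}$ of $G$ all of whose vertices carry colors in $C_c^{st}$ (so in particular $s,t\notin V^{st}$). The candidate cut in $G''$ is the set $S$ of all contracted nodes whose color lies in $C_c^{st}$. Since each color class $G[V_i]$ splits into at most $q$ connected components, each color of $C_c^{st}$ contributes at most $q$ nodes to $S$, giving $|S|\le qk$; and $s'',t''\notin S$ by construction.

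The core step is to show that $S$ disconnects $s''$ from $t''$ in $G''$. I would argue by contradiction: suppose $G''\setminus S$ contains a path $s''=w_0,w_1,\dots,w_\ell,w_{\ell+1}=t''$. Non-adjacency of $s$ and $t$ in $G$ forces $\ell\ge 1$, and each $w_j$ (for $1\le j\le\ell$) is the contracted node of a connected component $K_j$ of some color class with color not in $C_c^{st}$. Reading off the adjacency rules of $G''$ (an edge $w_jw_{j+1}$ comes from a $G$-edge between $K_j$ and $K_{j+1}$; the edges $s''w_1$ and $w_\ell t''$ come from $G$-edges joining $s$ to $K_1$ and $t$ to $K_\ell$), and using that each $K_j$ is connected in order to splice walks inside the components, I would assemble a walk in $G$ from $s$ to $t$ whose internal vertices all carry colors outside $C_c^{st}$. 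Such vertices avoid $V^{st}$, and $s,t\notin V^{st}$, so this walk survives in $G\setminus V^{st}$, contradicting that $V^{st}$ is an \st node cut. Hence $S$ is an $s''t''$ node cut of $G''$ of size at most $qk$, and the claimed bound follows.

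I expect the main difficulty to lie in the endpoint bookkeeping rather than in the idea. Because a color \st node cut is defined through a \emph{subset} $V^{st}$ of the colored vertices (not through whole color classes), the lifting argument must be run against $V^{st}$; and one must verify that $s$ and $t$ --- which may themselves be contracted nodes of $G''$ and may share a color with the cut --- never produce a shortcut path of the form $s''$--$K$--$t''$ in $G''\setminus S$. The non-adjacency hypothesis on $(s,t)$ together with the explicit adjacency rules of $G''$ are exactly what exclude this, so those rules should be spelled out carefully rather than invoked implicitly. (The reverse inequality, needed to turn this into a genuine $q$-approximation, follows from the easy direction already used in the proof of Lemma~\ref{th:span1}: the set of colors appearing on any $s''t''$ node cut of $G''$ is a color \st node cut of $G$, so the algorithm never reports a value below the optimum.)
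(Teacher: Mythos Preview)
Your proposal is correct and follows essentially the same approach as the paper: take a minimum color \st node cut $C^{st}_{c\min}$, observe that the contracted nodes in $G''$ carrying colors in $C^{st}_{c\min}$ number at most $q\,|C^{st}_{c\min}|$, and show that removing them separates $s''$ from $t''$. The paper compresses the separation step into ``by a similar reasoning as the proof of Lemma~\ref{th:span1}'', whereas you spell out the path-lifting contradiction explicitly; your added caution about the endpoints is well placed, since in the construction $s''$ and $t''$ are fresh vertices adjoined to $G''$ (not themselves contracted components), which is exactly what makes $s'',t''\notin S$ automatic.
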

\begin{proof}
Given that the induced graph $G[V_i]$ has at most $q$ components, after contracting each component into a node with color $i$, the number of nodes with color $i$ in $G''$ is at most $q$. Let $C^{st}_c$ denote a color node cut in $G$. By a similar reasoning as the proof of Lemma \ref{th:span1}, removing nodes with colors $C^{st}_c$ disconnects $s''$ from $t''$ in $G''$. Let $c^{st}_{\min}$ denote the value of the minimum color $st$ node cut $C^{st}_{c \min}$ in $G$. The number of nodes in $G''$ with colors $C^{st}_{c \min}$ is at most $c^{st}_{\min} q$. Moreover, $c^{st}_{\min} q$ is no smaller than the $s''t''$ node connectivity $k^{s''t''}$. Equivalently, $c^{st}_{\min}$ is at least $k^{s''t''} / q$.
\end{proof}

The global minimum color node cut of $G$ can be approximated to within factor $q$, by approximating the minimum color \st cuts for all non-adjacent \st pairs and taking the minimum size cut, and continuing Steps 2 and 3 of Algorithm \ref{al:span1global}. We conclude this section by summarizing the performance of the approximation algorithms.
\begin{thm}
  Given a colored graph $G(V,E,\CC)$, let $V_i$ be the set of nodes that have color $i$. If there are at most $q$ components in the induced graph $G[V_i]$, $\forall i$, then the values of the minimum color \st node cut and the global minimum color node cut can be approximated to within factor $q$ in $O(|V|^{0.5} |E| + |V|^2)$ and $O(|V|^{2.5} |E|)$ time, respectively. Note that if $q = 1$ the exact solutions are obtained.
\end{thm}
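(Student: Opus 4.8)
The plan is to assemble the statement from three pieces that are already in place: the factor-$q$ guarantee of Lemma~\ref{th:approx} for the \st case, the two-regime structure of a global node cut exploited by Algorithm~\ref{al:span1global}, and a routine accounting of the time spent on contraction and on vertex-connectivity computations. I would first pin down the \st case, then lift it to the global case, then do the timing.

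\emph{The \st case.} I would record both inequalities tying the $s''t''$ node connectivity $k^{s''t''}$ of $G''$ to the value $c^{st}_{\min}$ of the minimum color \st node cut of $G$. Lemma~\ref{th:approx} already gives $c^{st}_{\min}\ge k^{s''t''}/q$, so it remains to check $c^{st}_{\min}\le k^{s''t''}$, i.e.\ that the set of colors carried by a minimum $s''t''$ node cut is itself a color \st node cut of $G$. This is the ``easy direction'' of the proof of Lemma~\ref{th:span1}, and it transfers unchanged: two contracted nodes are adjacent in $G''$ only when some edge of $G$ joins the corresponding color classes, and $s'',t''$ are attached to every contracted component meeting $s,t$, so removing those colors (except at $s,t$ themselves) destroys every path from $s$ to $t$ in $G$; the corner cases in which $s$ or $t$ shares a color with a cut node are handled exactly as in Lemma~\ref{th:span1}. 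Hence $k^{s''t''}$ lies in $[\,c^{st}_{\min},\,q\,c^{st}_{\min}\,]$, which is the factor-$q$ bound, and it collapses to equality when $q=1$ because then $G''=G'$ and Lemma~\ref{th:span1} applies verbatim.

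\emph{The global case.} Here I would invoke the definition of a node cut: after deleting (a subset of) the cut colors, either $G$ breaks into at least two components, or exactly one vertex remains. In the first regime the cut is a color \st node cut for some non-adjacent pair $(s,t)$ sitting in different surviving components, so the minimum over all such pairs of the factor-$q$ estimates produced by the modified Algorithm~\ref{al:span1} is at most $q$ times the value of the best cut of this regime; and since each such estimate is itself a genuine color node cut of $G$, it is also at least the global optimum restricted to this regime, giving a two-sided bound. The second regime is matched \emph{exactly} by Step~2 of Algorithm~\ref{al:span1global} (the fewest colors covering $n-1$ vertices). Taking the smaller of the two in Step~3 therefore returns a value sandwiched between the global optimum and $q$ times it, again exact when $q=1$.

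\emph{Running time and main obstacle.} The connected components of all the induced subgraphs $G[V_i]$ are found in one $O(|V|+|E|)$ pass since the $V_i$ partition $V$, and the contracted graph $G''$ has $O(|V|)$ vertices and $O(|E|+|V|)$ edges because contraction creates no new edges; a minimum $s''t''$ node cut then costs $O(|V|^{0.5}|E|)$ via the standard node-splitting reduction to a unit-capacity maximum-flow problem (Even--Tarjan), and together with the preprocessing and the translation of the cut back to colors this yields the stated $O(|V|^{0.5}|E|+|V|^2)$. For the global cut the contraction is built once and reused while sweeping over the $O(|V|^2)$ non-adjacent pairs, costing $O(|V|^2\cdot|V|^{0.5}|E|)=O(|V|^{2.5}|E|)$, with Steps~2--3 adding only lower-order terms. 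I expect the only real subtlety to be keeping the factor-$q$ guarantee intact through the global reduction: one must be sure that \emph{both} regimes of a global node cut are covered (hence the separate, $q$-free Step~2) and that the per-pair outputs are genuinely two-sided, so that ``minimum of $q$-approximations'' is still a ``$q$-approximation of the minimum''; the corner cases around $s,t$ sharing colors with cut nodes recur but, as noted, cause no new difficulty since $s'',t''$ are attached just as $s',t'$ are in Algorithm~\ref{al:span1}.
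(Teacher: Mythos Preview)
Your proposal is correct and follows essentially the same route as the paper: invoke Lemma~\ref{th:approx} (together with the reverse inequality you rightly note is needed, which is the second direction of Lemma~\ref{th:span1} transplanted to $G''$) for the \st case, reuse the two-regime decomposition of Algorithm~\ref{al:span1global} for the global case, and account for time via a one-time contraction plus $O(|V|^{0.5}|E|)$ vertex-connectivity calls over $O(|V|^2)$ pairs. The paper is terser---it cites only Lemma~\ref{th:approx} for the ratio and justifies the $|V|^2$ term via an adjacency-matrix contraction rather than your linear-time component scan---but the structure and the accounting are the same.
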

\begin{proof}
  The fact that the minimum color \st node cut can be approximated to within factor $q$ follows from Lemma \ref{th:approx}. The contraction of connected nodes that have the same color takes $O(|V|^2)$ time, by updating the adjacency matrix representation of $G$. Adding $s''$ and $t''$ to $G''$ takes $O(|V|)$ time, by increasing the numbers of rows and columns of the adjacency matrix by two and adding the new connections. Computing the minimum node $s''t''$ cut in $G''$ takes $O(|V|^{0.5} |E|)$ time \cite{even1975network}. The total time of approximating the minimum color \st node cut is $O(|V|^{0.5} |E| + |V|^2)$.

  The global minimum color node cut of $G$ is the minimum over 1) $C_c^1$: the minimum color node \st cut $\forall st$, and 2) $C_c^2$: the minimum number of colors that cover at least $n-1$ nodes. Since the value of the minimum color \st node cut can be approximated to within factor $q$, the minimum over all non-adjacent \st pairs, $|C_c^1|$, can also be approximated to within factor $q$. Moreover, the exact value of $|C_c^2|$ can be obtained in $O(|V|)$ time. Thus, the global minimum color node cut of $G$ can be approximated to within factor $q$. The number of non-adjacent \st pairs is at most $|V|^2/2$. The contraction of nodes with the same color can be computed once and reused. Computing the connections between $s'', t''$ and the contracted nodes takes $O(|V|)$ time for each $(s'', t'')$ pair. Computing the minimum node $s''t''$ cut in $G''$ takes $O(|V|^{0.5} |E|)$ time for each $(s'', t'')$ pair. Thus, the computation of $|C_c^1|$ requires $O(|V|^2 + |V|^{0.5} |E||V|^2 + |V||V|^2) = O(|V|^{2.5} |E|)$ time.

  We remark that although there are faster algorithms to compute the global minimum node cut (\eg, \cite{gabow2006using}), not all the accelerations can be applied to our problem. For example, computing $(k+1)|V|$ pairs of minimum \st node cut is enough to obtain the global minimum node cut in a graph $G$, where $k$ is the node connectivity of $G$, because at least one node among $k+1$ nodes does not belong to a minimum cut and can be a source or destination node. However, this does not hold in our problem, where the number of nodes covered by a minimum color node cut can be large, and the \st node connectivity for $\Theta(|V|^2)$ \st pairs should be evaluated.
\end{proof}

\section{Maximizing the supply node connectivity}
\label{sc:assign}
In this section, we develop supply-demand assignment algorithms to maximize the supply node connectivity of the demand network. Given a fixed demand network topology, the robustness of the demand network depends on the assignment of supply nodes for each demand node. For example, if every node in a cut depends on the same set of supply nodes, then removing these supply nodes could disconnect the demand network. In contrast, if different nodes in every cut depend on different supply nodes, then a larger number of supply nodes should be removed to disconnect the demand network.

For simplicity, in this section, we assume:
\begin{enumerate}
\item Every demand node has a fixed number of supply nodes, denoted by $n_s$.
\item Every supply node can support an arbitrary number of demand nodes.
\end{enumerate}
The total number of supply-demand pairs is $n_1 n_s$, where $n_1$ is the number of nodes in the demand network $G_1$. In Section \ref{sc:twoway}, we study the case where the number of nodes supported by every supply node is fixed as well, and study the interdependence assignment that maximizes the supply node connectivity of both $G_1$ and $G_2$.

The supply-demand assignment problem can be stated as follows in the context of a colored graph. Given a graph $G(V,E)$ and colors $C$, assign a color $c_i \in C$ to each node, such that the value of the minimum color node cut of $G$ (or the minimum color \st node cut for $s,t \in V$) is maximized. Graph $G$ is the transformed graph of the demand graph $G_1$, obtained by Algorithm \ref{al:tran}, where each node is replicated into $n_s$ nodes.

Under the first assumption, according to Corollary \ref{th:cutvalue}, the node connectivity of $G$ is $k = k_1 n_s$, where $k_1$ is the node connectivity of the demand graph $G_1$. Under any color assignment, the minimum color node cut of $G$ is at most $k$. Moreover, the minimum color node cut of $G$ is upper bounded by $n_c$, the total number of available colors (\ie, the total number of supply nodes in $G_2$). We aim to assign colors to nodes in order for the value of the minimum color node cut to be close to $\min(k, n_c)$. If the value of the minimum color node cut is $\min(k, n_c) / \alpha$ under an assignment algorithm $A$, then $A$ is an $\alpha$-approximation algorithm.

\subsection{Maximizing the \st supply node connectivity by path-based assignment}
We first propose Algorithm \ref{al:maxst} that maximizes the value of the minimum color \st node cut, which is simple but provides insight towards maximizing the value of the global minimum color node cut of a graph. 
\begin{algorithm}[h]
\caption{Path-based Color Assignment.}
\label{al:maxst}
\begin{enumerate}
\item Compute the \st node connectivity $k^{st}$. Identify $k^{st}$ node-disjoint \st paths.
\item Assign the same color to all the nodes in a path. If $n_c \geq k^{st}$, assign a distinct color to each path. If $n_c < k^{st}$, assign a distinct color to each of $n_c$ paths, and assign an arbitrary color to each remaining path.
\end{enumerate}
\end{algorithm}

For the $k^{st}$ node-disjoint \st paths, any pair of paths do not share the same color if there are sufficient colors ($n_c \geq k^{st}$), by the assignment in Algorithm \ref{al:maxst}. Thus, $s$ and $t$ stay connected after removing fewer than $k^{st}$ colors. On the other hand, if $n_c < k^{st}$, there exist $n_c$ paths with distinct colors, and $s$ and $t$ stay connected after removing fewer than $n_c$ colors. To summarize, the performance of Algorithm \ref{al:maxst} is given by the following theorem.
\begin{thm} \label{th:path}
  The value of the minimum color \st node cut is $\min(k^{st}, n_c)$ if the colors are assigned according to the \emph{Path-based Color Assignment} algorithm, where $n_c$ is the number of colors and $k^{st}$ is the \st node connectivity.
\end{thm}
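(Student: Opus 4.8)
The plan is to prove the two bounds on the value of the minimum color \st node cut separately; write $m = \min(k^{st}, n_c)$. The easy direction shows the value is at most $m$ by exhibiting two concrete color \st node cuts. First, pick any minimum \st node cut $N_0$ of $G$; it has $|N_0| = k^{st}$ nodes, and the set of colors appearing on $N_0$ covers a superset of $N_0$ and hence is a color \st node cut with at most $k^{st}$ colors. Second, since $s$ and $t$ are non-adjacent, $V \setminus \{s,t\}$ is itself an \st node cut (removing it isolates both $s$ and $t$), and it is covered by the set of all $n_c$ colors, so that full color set is a color \st node cut with $n_c$ colors. Thus the minimum value is at most $\min(k^{st}, n_c) = m$. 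This direction uses only Definition~\ref{def:colorst} and the standing assumption that $(s,t)$ is non-adjacent.

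The substance is the matching lower bound. By Menger's theorem, Step~1 of \emph{Path-based Color Assignment} (Algorithm~\ref{al:maxst}) yields $k^{st}$ internally node-disjoint \st paths, and Step~2 then guarantees at least $m$ of them, say $P_1, \dots, P_m$, whose internal nodes carry pairwise distinct colors $c_1, \dots, c_m$: when $n_c \geq k^{st}$ all $k^{st}$ paths receive distinct colors, and when $n_c < k^{st}$ exactly $n_c = m$ of them do. Suppose, for contradiction, that some color \st node cut $C_c^{st}$ satisfies $|C_c^{st}| < m$. Since the $c_i$ are distinct, there is an index $i$ with $c_i \notin C_c^{st}$. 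By Definition~\ref{def:colorst}, the nodes covered by $C_c^{st}$ contain an \st node cut $N$; by definition $N \subseteq V \setminus \{s,t\}$ and every node of $N$ has a color in $C_c^{st}$. Hence no internal node of $P_i$ lies in $N$ (they all have color $c_i \notin C_c^{st}$), and $s, t \notin N$ either, so the whole path $P_i$ survives the removal of $N$ and still connects $s$ to $t$, contradicting that $N$ is an \st node cut. Therefore every color \st node cut uses at least $m$ colors, and with the upper bound this pins the value at exactly $m$.

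I expect the point needing the most care to be the bookkeeping around $s$ and $t$. Because $s$ and $t$ lie on every one of the node-disjoint paths, the instruction ``assign the same color to all the nodes in a path'' must be read as coloring the \emph{internal} nodes of each path, and one must invoke the fact that a color \st node cut, by definition, never removes $s$ or $t$, even if they happen to share a color with some path's internal nodes. Relatedly, it is essential that the \st node cut $N$ produced by Definition~\ref{def:colorst} need only be a \emph{subset} of the nodes covered by $C_c^{st}$: that is precisely why $N$ cannot contain the untouched path $P_i$, which is what drives the contradiction. These are the same sorts of corner cases flagged in the remark following Lemma~\ref{th:span1}; the remaining ingredients (Menger's theorem for Step~1 and the two trivial cuts for the upper bound) are entirely standard.
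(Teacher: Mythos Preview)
Your proof is correct and follows essentially the same approach as the paper: the paper's argument (given in the paragraph immediately preceding the theorem) is that among the $k^{st}$ node-disjoint paths, $\min(k^{st},n_c)$ receive distinct colors, so removing fewer than $\min(k^{st},n_c)$ colors leaves at least one path intact. You add an explicit upper-bound argument and spell out the handling of $s,t$ and of nodes covered but not in the cut, which the paper leaves implicit, but the core idea is the same.
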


It is worth noting that assigning the same color to multiple nodes in a path does not reduce the value of the minimum color \st node cut, compared with assigning a distinct color to each node. The reason is that, a path is disconnected as long as at least one node in the path is removed. To generalize, if a set of nodes together form a ``functional group'', it is better for nodes in the same group to share the same risk. In contrast, nodes in different groups should avoid sharing the same risk. We leverage this idea to maximize the global minimum color cut of a graph.

\subsection{Maximizing the global supply node connectivity by CDS-based assignment}
In the remainder of this section, we consider the color assignment that maximizes the global minimum color node cut of a graph. It is helpful to identify the group of nodes that support graph connectivity, analogous to nodes in a path that support \st connectivity. Indeed, nodes in a \emph{connected dominating set} (CDS) form such a group. A CDS is a set of nodes $S$ such that the induced graph $G[S]$ is connected and that every node in $V$ either belongs to $S$ or is adjacent to a node in $S$. If none of the nodes $S$ are removed, then the graph stays connected regardless of the number of removed nodes in $V \setminus S$. Namely, any subset of nodes $V \setminus S$ is not a node cut of the graph.

The natural analog of node-disjoint \st paths is (node) disjoint CDS, which support graph connectivity. The failures of nodes in one CDS do not affect another disjoint CDS, while a survived CDS suffices to keep the graph connected.
CDS partitions, which partition nodes of $G(V,E)$ into multiple disjoint CDS, have been studied in \cite{censor2014distributed, censor2014new, Censor-Hillel2015}. If the node connectivity of $G(V,E)$ is $k$ and $G(V,E)$ has $n$ nodes, then $\Omega(k / \log^2 n)$ node-disjoint CDS can be obtained in nearly linear time $O(m \text{ polylog } m)$, where $m$ is the number of edges \cite{censor2014distributed, Censor-Hillel2015}.

We propose Algorithm \ref{al:maxglobal} that assigns colors based on CDS partitions.
\begin{algorithm}[h]
\caption{CDS-based Color Assignment.}
\label{al:maxglobal}
\begin{enumerate}
\item Compute the node connectivity $k$ of $G$. Identify $k^\text{CDS} = \Omega(k / \log^2 n)$ node-disjoint CDS using the algorithm in~\cite{Censor-Hillel2015}.
\item Assign the same color to all the nodes in a CDS. If $n_c \geq k^\text{CDS}$, assign a distinct color to each CDS. If $n_c < k^\text{CDS}$, assign a distinct color to each of $n_c$ CDS, and assign an arbitrary color to each remaining CDS.
\end{enumerate}
\end{algorithm}

The performance of Algorithm \ref{al:maxglobal} can be analyzed in a similar approach to that of Algorithm \ref{al:maxst}. If $n_c \geq k^\text{CDS}$, each CDS has a distinct color, and the graph stays connected after removing fewer than $k^\text{CDS}$ colors. If $n_c < k^\text{CDS}$, $n_c$ CDS have distinct colors, and the graph stays connected after removing fewer than $n_c$ colors. Therefore, the value of the minimum color node cut is at least $\min(k^\text{CDS}, n_c)$. The performance of Algorithm \ref{al:maxglobal} is summarized by the following theorem.

\begin{thm}\label{th:cds}
The value of the minimum color node cut of $G$ is at least $\min(\Omega(k / \log^2 n), n_c)$ if the colors are assigned according to the \emph{CDS-based Color Assignment} algorithm, where $n_c$ is the number of colors, $n$ is the number of nodes, and $k$ is the node connectivity of $G$. The \emph{CDS-based Color Assignment} algorithm is an $O(\log^2 n)$-approximation algorithm.
\end{thm}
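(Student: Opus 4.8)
The plan is to adapt the argument behind Theorem~\ref{th:path} (the \st case) by replacing node-disjoint \st paths with node-disjoint connected dominating sets. The starting point is the CDS-partition guarantee quoted just before Algorithm~\ref{al:maxglobal}: since $G$ has node connectivity $k$ and $n$ nodes, Step~1 produces $k^\text{CDS}=\Omega(k/\log^2 n)$ pairwise node-disjoint connected dominating sets $S_1,\dots,S_{k^\text{CDS}}$ in nearly linear time \cite{Censor-Hillel2015}, and Step~2 gives each $S_j$ its own color when $n_c\ge k^\text{CDS}$, or gives $n_c$ of them distinct colors (with the remaining CDS and all nodes outside $\bigcup_j S_j$ arbitrary colors) when $n_c<k^\text{CDS}$.

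First I would record the single structural fact that drives everything: \emph{if a CDS $S_j$ is disjoint from a removed node set $V''\subseteq V$, then $G-V''$ is connected}. Indeed $G[S_j]$ is connected and, $S_j$ being dominating, every vertex of $G$ lies in $S_j$ or is adjacent to it; hence in $G-V''$ every surviving vertex is joined to the still-present connected set $S_j$, so $G-V''$ is connected.

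Next comes the counting step. Take any color set $C_c$ with $|C_c|<\min(k^\text{CDS},n_c)$. If $n_c\ge k^\text{CDS}$, then among the $k^\text{CDS}$ distinct CDS-colors at least one, say the color of $S_j$, is not in $C_c$; if $n_c<k^\text{CDS}$, then among the $n_c$ distinctly colored CDSs at least one, say $S_j$, has a color outside $C_c$ (the $n_c$ colors are exhausted by these CDSs and $|C_c|<n_c$). In either case the nodes of $S_j$ never carry a color of $C_c$, so $S_j$ is disjoint from every subset $V''$ of the $C_c$-colored nodes; by the structural fact $G-V''$ stays connected, so the $C_c$-colored nodes contain no node cut and $C_c$ is not a color node cut. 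Hence the minimum color node cut has value at least $\min(k^\text{CDS},n_c)=\min(\Omega(k/\log^2 n),n_c)$, which is the first claim. For the approximation factor I would then invoke the bound (stated just before Algorithm~\ref{al:maxst}) that under any assignment the minimum color node cut is at most $\min(k,n_c)$: when $n_c<k^\text{CDS}$ the algorithm already achieves $n_c\ge\min(k,n_c)$, and when $n_c\ge k^\text{CDS}$ it achieves $k^\text{CDS}=\Omega(k/\log^2 n)$, so $\min(k,n_c)/k^\text{CDS}\le k/\Omega(k/\log^2 n)=O(\log^2 n)$; thus Algorithm~\ref{al:maxglobal} is an $O(\log^2 n)$-approximation.

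The step I expect to need the most care is the degenerate branch of the node-cut definition in which removing $V''$ leaves a \emph{single} vertex rather than a disconnected graph: the structural fact above only yields connectedness, so if the surviving CDS $S_j$ is a lone (necessarily universal) vertex one must additionally rule out that the fewer than $\min(k^\text{CDS},n_c)$ removed colors cover all of $V\setminus S_j$. I would handle this by arguing that the other distinctly colored CDSs together with the nodes outside $\bigcup_j S_j$ cannot all be squeezed into so few colors (tightening the color assignment of the leftover nodes if necessary), or, alternatively, by treating graphs possessing a universal vertex separately, since for them $\min(k^\text{CDS},n_c)$ is not the binding quantity. Everything else---the CDS-survival fact and the two-case pigeonhole---is routine once the CDS-partition theorem of \cite{Censor-Hillel2015} is in hand.
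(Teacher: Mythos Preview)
Your proof is correct and mirrors the paper's own argument: a surviving distinctly-colored CDS keeps $G$ connected after any removal of $C_c$-colored nodes, so fewer than $\min(k^\text{CDS},n_c)$ colors cannot form a color node cut, and the approximation ratio then follows from the $\min(k,n_c)$ upper bound. The singleton-CDS edge case you flag is real (the paper's one-line justification glosses over it as well), but it can cost at most one in the count---merge the lone universal vertex into any other CDS---so it does not affect the asymptotic $\Omega(k/\log^2 n)$ bound and need not be labored over.
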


\subsection{Maximizing the global supply node connectivity by random assignment} \label{sc:random}
Finally, we study a Random Assignment algorithm. The algorithm is to assign each node a color randomly with equal probability. The intuition behind the Random Assignment algorithm is that nodes in a small cut are unlikely to be assigned with the same color if the number of colors is large. Thus, removing the nodes associated with a small number of colors is unlikely to disconnect the graph.

In fact, the Random Assignment algorithm has provably good performance. The analysis relies on the recently studied \emph{vertex sampling} problem in \cite{Censor-Hillel2015}. We first restate a sampling theorem in \cite{Censor-Hillel2015} as follows.

\begin{lem}[Theorem 6 in \cite{Censor-Hillel2015}]
\label{th:sampling}
Consider a graph $G$ in which each node is removed independently with a given probability $1-p$. For $ 0 < \delta < 1$, if the probability that a node is not removed satisfies $p \geq \beta \sqrt{\log (n/\delta) / k}$ for a sufficiently large constant $\beta$, then the remaining graph is connected with probability at least $1 - \delta$, where $n$ is the number of nodes and $k$ is the node connectivity of $G$.
\end{lem}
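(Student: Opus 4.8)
The plan is as follows. Since Lemma~\ref{th:sampling} is essentially a restatement of Theorem~6 in \cite{Censor-Hillel2015}, within this paper it is invoked rather than reproved; the sketch below indicates the argument one would give from scratch. Fix a $k$-connected graph $G$ on $n$ vertices and let $R\subseteq V$ be the random set of surviving vertices, each included independently with probability $p$. The goal is to bound $\Pr[G[R]\text{ is disconnected}]$ by $\delta$. First I would record the structural consequence of $k$-connectivity that drives everything: if $A$ is any vertex subset with $A\neq\emptyset$ and $V\setminus(A\cup N_G(A))\neq\emptyset$, then the external neighborhood satisfies $|N_G(A)|\ge k$, since otherwise $N_G(A)$ would be a vertex cut of size less than $k$. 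Consequently, if $C$ is a connected component of $G[R]$ that is not all of $R$, then $C\subseteq R$ (every vertex of $C$ survived), $N_G(C)\cap R=\emptyset$, and $|N_G(C)|\ge k$; the event $N_G(C)\cap R=\emptyset$ has probability at most $(1-p)^{|N_G(C)|}\le e^{-pk}$, and the two events $C\subseteq R$ and $N_G(C)\cap R=\emptyset$ concern disjoint vertex sets and are therefore independent.

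A naive union bound over all candidate components $C$ then gives $\Pr[G[R]\text{ disconnected}]\le 2^{n}e^{-pk}$, which would force $pk=\Omega(n)$ --- far more than the claimed threshold $p\ge\beta\sqrt{\log(n/\delta)/k}$. The technical heart of the proof is therefore to replace this crude count by a much finer one: one must show that the collection of ``small'' vertex separators of $G$ (those of size $O(\alpha k)$) has size only roughly $n^{O(\alpha)}$, in the spirit of Karger's counting bound for near-minimum \emph{edge} cuts, and then sum the product $\Pr[C\subseteq R]\,\Pr[N_G(C)\cap R=\emptyset]$ over separators grouped by size. Balancing the number of separators against their exponentially small failure probability, and extracting the correct square-root dependence rather than a weaker $1/k$ dependence, is precisely the delicate step. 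An alternative route is to exploit a fractional connected-dominating-set packing of value $\Omega(k/\log n)$ in $G$ (of the kind used in \cite{Censor-Hillel2015, censor2014new}) together with a low-diameter decomposition of $G$, recursing and union-bounding only over polynomially many bad events at each level.

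I expect the main obstacle to be exactly this counting/decomposition step: controlling the number of vertex cuts that could witness disconnection of $G[R]$ while keeping the union bound small enough to require only $p=\Omega(\sqrt{\log(n/\delta)/k})$. The vertex-expansion fact and the Chernoff-type estimate $(1-p)^{|S|}\le e^{-p|S|}$ are routine; everything hinges on not paying a factor $2^{n}$. Since this is carried out in full in \cite{Censor-Hillel2015}, we take Lemma~\ref{th:sampling} as given and use it as a black box in the analysis of the Random Assignment algorithm.
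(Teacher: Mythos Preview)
Your proposal is correct and matches the paper's treatment: Lemma~\ref{th:sampling} is stated in the paper purely as a citation of Theorem~6 in \cite{Censor-Hillel2015} with no accompanying proof, and is used as a black box exactly as you indicate. The proof sketch you outline is reasonable background but goes beyond what the paper itself provides.
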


This sampling theorem provides a sufficient condition for a graph to be connected with high probability after its nodes are randomly removed. In particular, we use the following corollary.

\begin{crl}
\label{th:constremoval}
Given a graph $G$ with $n$ nodes and node connectivity $k = \omega(\log n)$, if each node is removed with up to a \emph{constant} probability $1-p < 1$, then the remaining nodes in $G$ are connected with probability $1 - \delta$ where $\delta = O(n e^{-\alpha k})$ for some constant $\alpha$.
\end{crl}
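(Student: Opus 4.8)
The plan is to derive the corollary as a one-line specialization of Lemma~\ref{th:sampling}: use the retention probability $p$ (which is a constant bounded away from $0$, since the removal probability $1-p$ is a constant strictly below $1$) as the sampling parameter, and pick the failure probability $\delta$ as large as the hypothesis of Lemma~\ref{th:sampling} permits. (If the removal probabilities only satisfy $q_v \le 1-p$ rather than all being equal to $1-p$, apply the lemma with the uniform removal probability $1-p$.)

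First I would unwind the hypothesis $p \ge \beta\sqrt{\log(n/\delta)/k}$ of Lemma~\ref{th:sampling}, where $\beta$ is the fixed, sufficiently large constant supplied by that lemma. Both sides are nonnegative, so squaring and solving for $\delta$ shows the hypothesis is equivalent to $\log(n/\delta) \le p^{2}k/\beta^{2}$, that is, to $\delta \ge n\,e^{-p^{2}k/\beta^{2}}$. I would therefore set
\[
\delta := n\,e^{-\alpha k}, \qquad \alpha := p^{2}/\beta^{2},
\]
a strictly positive constant depending only on $p$ and $\beta$. With this choice the hypothesis of Lemma~\ref{th:sampling} holds with equality, and $\delta = O(ne^{-\alpha k})$ as required.

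Next I would check the admissibility condition $0<\delta<1$ of Lemma~\ref{th:sampling}. Since $\delta = ne^{-\alpha k} = e^{\ln n - \alpha k}$, we have $\delta<1$ exactly when $\alpha k > \ln n$; because $\alpha$ is a constant and $k = \omega(\log n)$, this holds once $n$ is large enough. In that regime Lemma~\ref{th:sampling} gives that the remaining nodes of $G$ induce a connected subgraph with probability at least $1-\delta = 1 - ne^{-\alpha k}$, which is a bound of the form $1 - O(ne^{-\alpha k})$. In the complementary regime, where $ne^{-\alpha k}\ge 1$, the asserted lower bound $1-\delta$ on the connectivity probability is $\le 0$, so it holds trivially. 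Hence the claim holds for all $n$.

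I do not expect a genuine obstacle here, as the argument is essentially algebraic; the only point requiring attention is the bookkeeping of constants — $\beta$ is fixed by Lemma~\ref{th:sampling} and $p$ is the constant retention probability from the hypothesis, so $\alpha = p^{2}/\beta^{2}$ is a true constant, independent of $n$ and $k$, as Corollary~\ref{th:constremoval} requires.
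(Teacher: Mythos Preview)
Your proof is correct and follows essentially the same approach as the paper: invert the hypothesis of Lemma~\ref{th:sampling} to solve for the largest admissible $\delta$, obtaining $\delta = n e^{-\alpha k}$ with $\alpha = p^{2}/\beta^{2}$. You are somewhat more careful than the paper in explicitly checking the admissibility condition $0<\delta<1$ and handling the trivial regime $ne^{-\alpha k}\ge 1$, but the argument is otherwise identical.
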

\begin{proof}
Given that the probability $p$ that each node remains in $G$ is at least a constant greater than zero, from Lemma \ref{th:sampling} we know that the probability $\delta$ that $G$ is disconnected satisfies the following equation.
\begin{align*}
k (p / {\beta})^2 &= \log(n/\delta), \\
\delta &= n e^{-\alpha k},
\end{align*}
where
$\alpha = (p / {\beta})^2$ is a constant.

Moreover, since $k = \omega(\log n)$, $\delta = n e^{-\alpha k} \leq n^{-1} = o(1).$ The probability that the remaining nodes are connected is high.
\end{proof}

On the other hand, if $k = O(\log n)$, $\beta \sqrt{\log(n/\delta) / k} \geq \beta \sqrt{\log (n) / k} = \Omega(1)$. The condition in Lemma \ref{th:sampling} cannot be satisfied, unless the hidden constant in $k = O(\log n)$ is large. Thus, the probability that the graph is disconnected after randomly removing a given fraction of nodes cannot be bounded using this approach. For simplicity, in the following we focus on graphs where $k = \omega(\log n)$.

In a colored graph $G$ where nodes are randomly colored using a total of $n_c$ colors, removing nodes with colors that belong to a given set of $k'$ colors is equivalent to removing each node with probability $k'/n_c$. The probability of removing a node is at most a constant, by restricting $k'$ to be at most $(1 - \epsilon)n_c$ for a constant $\epsilon > 0$. Thus, by Corollary \ref{th:constremoval}, the probability that $G$ is disconnected after removing nodes with a given set of $k'$ colors is small. By a union bound over ${n_c \choose k'}$ combinations of $k'$ colors, the probability $p_\text{union}$ that $G$ is disconnected after removing nodes with \emph{any} set of $k'$ colors can be bounded. If $p_\text{union}$ is small, and the remaining nodes form a CDS with high probability (such that removing any subset of nodes with any $k'$ colors does not disconnect $G$), then the value of the minimum color node cut of $G$ is at least $k' + 1$ with high probability. We next fill in the details of the proof,
and our approach closely follows the approach of computing node connectivity after random node sampling in~\cite{Censor-Hillel2015}.


\begin{thm}
\label{th:random}
By assigning a color uniformly at random to each of the $n$ nodes of $G$, the value of the minimum color node cut of $G$ is $\Theta (\min(k, n_c))$ with high probability, where $n_c$ is the number of colors and $k = \omega(\log n)$ is the node connectivity of $G$. If, in addition, $k = \omega(n_c)$, then the value of the minimum color node cut of $G$ is at least $(1 - \epsilon) n_c$ with high probability for any constant $\epsilon > 0$.
\end{thm}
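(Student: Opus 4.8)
The plan is to establish two matching bounds. The upper bound $O(\min(k,n_c))$ is immediate: any color node cut uses at most $n_c$ colors, and (since the underlying uncolored graph has connectivity $k$) the nodes it covers must contain a node cut, so it covers at least $k$ nodes, hence it uses at least $k/(\text{max color multiplicity})$ colors; but more simply, removing \emph{all} $n_c$ colors certainly disconnects the graph, and there is a color node cut of size $\min(k, n_c)$ — actually the cleaner statement is just that the value is at most $n_c$ and at most $k$ (the latter because $k$ nodes forming a minimum node cut carry at most $k$ colors). So the value is $O(\min(k,n_c))$ trivially, and the work is the lower bound $\Omega(\min(k,n_c))$ with high probability, together with the sharper $(1-\epsilon)n_c$ bound when $k = \omega(n_c)$.

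For the lower bound, set $k' = \lfloor c\min(k,n_c)\rfloor$ for a small constant $c$ to be chosen (and, in the sharper regime, $k' = \lceil(1-\epsilon)n_c\rceil$ instead). First I would fix an arbitrary set $S$ of $k'$ colors. Removing all nodes whose color lies in $S$ deletes each node independently with probability $k'/n_c \le \max(c, 1-\epsilon) < 1$, a constant bounded away from $1$. By Corollary~\ref{th:constremoval}, since $k = \omega(\log n)$, the surviving nodes are connected with probability $1 - O(n e^{-\alpha k})$ for a constant $\alpha$ depending on that removal probability. Moreover — and this is the point that makes ``connected'' upgrade to ``no subset is a cut'' — I need the stronger conclusion that the survivors form a \emph{connected dominating set} of $G$, equivalently that removing \emph{any} subset of the $S$-colored nodes leaves $G$ connected; this follows because the vertex-sampling statement of Lemma~\ref{th:sampling} can be applied not just to connectivity but to the survivors dominating $G$ with the same kind of bound (this is exactly how \cite{Censor-Hillel2015} handles it), so with probability $1 - O(n e^{-\alpha' k})$ the $S$-survivors are a CDS, and then no subset of $S$-colored nodes is a node cut.

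Next I would take a union bound over all $\binom{n_c}{k'}$ choices of $S$. Since $\binom{n_c}{k'} \le 2^{n_c} \le 2^{n}$ and the per-set failure probability is $O(n e^{-\alpha' k})$, the total failure probability is $O(n \, 2^{n} e^{-\alpha' k})$, which is $o(1)$ provided $\alpha' k$ beats $n\log 2$ — and here is the subtlety I flag as the \textbf{main obstacle}: a crude $2^{n}$ bound on $\binom{n_c}{k'}$ is too lossy unless $k = \omega(n)$, so I instead bound $\binom{n_c}{k'} \le (e\,n_c/(n_c-k'))^{n_c-k'}$ or, in the $k'=c\min(k,n_c)$ regime, $\binom{n_c}{k'}\le \binom{n_c}{cn_c} \le 2^{H(c)n_c}$, and I must choose the constant $c$ small enough that $H(c)\,n_c < \alpha' k$; this works because $n_c = O(k)$ in the regime where $\min(k,n_c)=n_c$ is active, and is even easier when $\min(k,n_c)=k$. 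In the sharper regime $k=\omega(n_c)$ with $k'=(1-\epsilon)n_c$, one has $\binom{n_c}{k'}\le 2^{n_c} = 2^{o(k)}$, which is dominated by $e^{\alpha' k}$ outright. Conditioned on the good event (no $k'$-color set is a cut, for every such set), the value of the minimum color node cut is at least $k'+1 = \Theta(\min(k,n_c))$, respectively at least $(1-\epsilon)n_c$, which together with the upper bound finishes the proof. The remaining routine step is just to track that the constants $\alpha,\alpha'$ from Corollary~\ref{th:constremoval} depend only on the (constant) removal probability and $\beta$, so they do not degrade as $n\to\infty$.
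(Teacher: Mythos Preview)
Your approach matches the paper's in the regimes $k = \Theta(n_c)$ and $k = \omega(n_c)$: fix a color set $S$ of size $k'$, use Corollary~\ref{th:constremoval} (plus a domination argument) to bound the probability that the survivors fail to be a CDS, then union-bound over all $S$. The domination step is handled in the paper by a direct minimum-degree calculation (each node has $\geq k$ neighbors, so the chance all are removed is at most $(k'/n_c)^k$) rather than by invoking \cite{Censor-Hillel2015}, but either works.

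There is, however, a genuine gap in the case $k = o(n_c)$, which you dismiss with ``is even easier when $\min(k,n_c)=k$.'' It is not easier. With $k' = ck$, the number of $k'$-subsets of colors is $\binom{n_c}{ck} \leq (e n_c/(ck))^{ck}$, whose logarithm is $\Theta\bigl(k\log(n_c/k)\bigr)$; when $n_c/k \to \infty$ this is $\omega(k)$ and is \emph{not} dominated by the per-set failure exponent $\alpha' k$. Your alternative bound $\binom{n_c}{k'} \leq \binom{n_c}{cn_c} \leq 2^{H(c)n_c}$ is even worse here, since it would require $H(c)\,n_c < \alpha' k$, i.e.\ $H(c) = o(1)$, impossible for any fixed $c > 0$. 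The vertex-sampling lemma does not rescue this either: even with survival probability $1 - ck/n_c \to 1$, the failure probability in Lemma~\ref{th:sampling} is still only $e^{-\Theta(k)}$.

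The paper's fix, which you are missing, is a color-grouping trick (following \cite{Censor-Hillel2015}): partition the $n_c$ colors into $2k'$ groups of roughly equal size, and instead of union-bounding over $k'$-subsets of colors, union-bound over $k'$-subsets of \emph{groups}. Removing $k'$ out of $2k'$ groups deletes each node with probability $1/2$, so Corollary~\ref{th:constremoval} still applies with a fixed constant $\alpha'$, while the number of events drops to $\binom{2k'}{k'} \leq (2e)^{k'} = e^{O(k')}$, which is beaten by $e^{\alpha' k}$ once $k' = \alpha' k /(2\log(2e)) = \Theta(k)$. Since any set of $k'$ colors is contained in some set of $k'$ groups, surviving every group-removal implies surviving every $k'$-color removal, and the lower bound $\Theta(k)$ follows.
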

\begin{proof}

We prove the theorem under three cases: i) $k = \Theta(n_c)$; ii) $k = \omega(n_c)$; and iii) $k = o(n_c)$. In all of the three cases, $k = \omega(\log n)$.

\textbf{i) First we consider the case where $k = \Theta (n_c)$.} For $k' \leq (1 - \epsilon)n_c$, where $\epsilon > 0$ is a constant, the probability that $G$ is disconnected after removing the nodes covered by a randomly selected set of $k'$ colors is $O(ne^{-\alpha k})$, for a constant $\alpha$ (Corollary \ref{th:constremoval}). The total number of $k'$ color combinations among the $n_c$ colors is ${n_c \choose k'} \leq (\frac{e n_c}{k'})^{k'}$. Thus, by the union bound, the probability that $G$ is disconnected after removing nodes with any $k'$ colors is at most $p_\text{union-1} = O(n e^{-\alpha k} (\frac{e n_c}{k'})^{k'})$. Let $k' = \alpha \min(k, n_c) / (2 \eta) \leq (1 - \epsilon)n_c$, where $\eta$ satisfies $\eta = \log \frac{e n_c}{k'} = \log \frac{2 \eta e n_c}{ \alpha \min(k, n_c)}$ and is a constant.
\begin{align*}
\log p_\text{union-1} &\leq \log (n e^{-\alpha k} (\frac{e n_c}{k'})^{k'}) \\
&= \log n - \alpha k + k' \log \frac{e n_c}{k'} \\
&= \log n - \alpha k + \alpha \min(k, n_c)/2 \\
&\leq \log n - \alpha k / 2 \\
&\leq - \gamma \log n,
\end{align*}
for a constant $\gamma > 0$. The last inequality follows from $k = \omega(\log n)$. Therefore, the probability that $G$ is disconnected is at most $n^{-\gamma} = o(1)$.

The above approach proves that with high probability, removing nodes with any $k'$ colors does not disconnect $G$. Before concluding that the value of the minimum color node cut of $G$ is at least $k'$, we need to prove that removing \emph{any subset} of nodes with any $k'$ colors does not disconnect $G$ (recall Definition \ref{def:colorglobal} of a color node cut). A sufficient condition is that the remaining nodes form a dominating set of $G$.

Since the node connectivity of $G$ is $k$, the minimum degree of a node in $G$ is at least $k$. The probability that all the neighbors of a node are removed is $(k'/n_c)^k$. Let $k' \leq (1 - \epsilon) n_c$ for a constant $\epsilon > 0$. The probability that there is at least one node whose neighbors are all removed can be upper bounded using the union bound
\begin{equation}\label{eq:domination}
  p_\text{union-2} = n (k'/n_c)^k \leq n (1 - \epsilon)^k = o(1).
\end{equation}

The last inequality follows from $k = \omega(\log n)$.
With probability $1 - o(1)$, there does not exist a node whose neighbors are all removed. Thus, the remaining nodes form a dominating set.

To conclude, with probability at least $1- p_\text{union-1} - p_\text{union-2} = 1 - o(1)$, the value of the minimum color node cut of $G$ is at least $k' = \Theta(k)$ if $k = \Theta(n_c)$ and $k = \omega(\log n)$.


\textbf{ii) Next we consider the case where $k = \omega (n_c)$.} 
Let $k' = (1 - \epsilon) n_c$.
\begin{align*}
\log p_\text{union-1} &\leq \log (n e^{-\alpha k} (\frac{e n_c}{k'})^{k'}) \\
&= \log n - \alpha k + k' \log \frac{e n_c}{k'} \\
&\leq \log n - \alpha k + 2k'
 \leq - \gamma \log n,
\end{align*}
for a constant $\gamma$. The last inequality holds because $k' = o(k)$ (equivalently, $(1 - \epsilon) n_c = o(k)$ and $k = \omega(n_c)$) and $\log n = o(k)$ (equivalently, $k = \omega(\log n)$). The value of the minimum color node cut of $G$ is at least $(1 - \epsilon) n_c$ with probability $1 - p_\text{union-1} - p_\text{union-2} = 1 - o(1)$.

\textbf{iii) Finally we consider the case where $k = o(n_c)$.} Directly using $p_\text{union-1}$ would incur an $O(\log n_c)$ gap from the optimal $k'$ (\ie, $k' = \Omega(k/\log n_c)$), because the number of $k'$ out of $n_c$ choices is large and the union bound $p_\text{union-1}$ is too weak. However, it is possible to reduce the number of choices, at the cost of removing a larger number of nodes. We use the same approach as in \cite{Censor-Hillel2015}. Partition the colors into $2k' = o(n_c)$ groups. Instead of removing nodes with colors in a selected set of $k'$ colors, we consider removing nodes with colors in a selected set of $k'$ color groups, which consists of around $n_c/2$ colors. The probability that each node is removed is $1/2$. The probability that $G$ become disconnected is still $\delta = O(n e^{-\alpha k})$. The total number of events (\ie, combinations of $k'$ color groups out of $2k'$ color groups) is reduced to ${2 k' \choose k'} \leq (2e)^{k'}$. For $k' = \alpha k / (2 \log(2e))$,
\begin{align*}
\log p_\text{union-3} &\leq \log (n e^{-\alpha k} (\frac{2 e k'}{k'})^{k'}) \\
&= \log n - \alpha k + k' \log(2e) \\
&\leq \log n - \alpha k/2
 \leq - \gamma \log n,
\end{align*}
for a constant $\gamma$.

Thus, the value of the minimum color node cut of $G$ is at least $k' = \alpha k / (2 \log(2e)) = \Theta(k)$ with high probability $1 - p_\text{union-3} - p_\text{union-2} = 1 - o(1)$.
\end{proof}

Theorem \ref{th:random} proves that the Random Assignment algorithm is an $O(1)$-approximation algorithm if $k = \omega(\log n)$. If $k = O(\log n)$, under any assignment the minimum color node cut value is at least one, and the approximation ratio is at most $O(\log n)$.

\section{Bidirectional interdependence}
\label{sc:twoway}
In the previous sections, we considered a one-way dependence model. In this section, we extend the results to a \emph{bidirectional interdependence} model.
Let $G_1(V_1, E_1)$ and $G_2(V_2, E_2)$ denote two interdependent networks. \emph{Interdependence edges} connect nodes between two networks, which represent their supply-demand relationship. The key difference from the one-way dependence model is that the interdependence edges are \emph{bidirectional} (\ie, if node $v \in G_1$ depends on node $u \in G_2$, then $u$ depends on $v$ as well).

If a node $v$ in $G_1$ fails due to the failures of its supply nodes in $G_2$, then the failure of $v$ does not lead to further node failures (due to a lack of supply) in $G_2$, because all the nodes in $G_2$ that depend on $v$ have failed. Otherwise, $v$ would not have failed in the first place. Therefore, the evaluation of supply node connectivity in the bidirectional interdependence model follows the same methods as the one-way dependence model. What remains to be developed is the \emph{interdependence assignment} that maximizes the supply node connectivity of \emph{both} networks.


We assume that there are $n_{si}$ interdependence edges adjacent to each of the $n_i$ nodes in $G_i$ ($\forall i = \{1,2\}$). The total number of bidirectional interdependence edges is $n_1 n_{s1} = n_2 n_{s2}$. Under this assumption, a node in $G_i$ is functional if at least one of its adjacent $n_{si}$ interdependence edges is connected to a remaining node (\ie, a node that has not been removed) in $G_j$ ($\forall i,j = \{1,2\}, i \neq j$).

We now give an overview of the bidirectional interdependence assignment algorithms. To extend the CDS-based color assignment to interdependence assignment, we aim to avoid disjoint CDS sharing the same supply nodes as much as possible, in both networks. Nodes in $G_1$ are partitioned into groups of size $n_{s2}$, and nodes in $G_2$ are partitioned into groups of size $n_{s1}$. Interdependence is assigned between each group in $G_1$ and each group in $G_2$. Consider a group $P_1 \in G_1$, and a corresponding group $P_2 \in G_2$. Every node $v_1 \in P_1$ depends on all the nodes in $P_2$, and every node $v_2 \in P_2$ depends on all the nodes in $P_1$. The key is to partition nodes in $G_1$ and $G_2$ into groups. 
The partition is obvious when the number of nodes in each CDS in $G_i$ is a multiple of $n_{sj}$ ($\forall i,j = \{1,2\}, i \neq j$), in which case disjoint CDS do not share any supply node. See Fig. \ref{fig:cds} for an illustration. Otherwise, in general, disjoint CDS may have to share some supply nodes. As we will prove later, the supply node connectivity will be reduced by at most a half, compared with the ideal case where disjoint CDS do not share any supply node. The same analysis applies to the path-based assignment that maximizes the \st supply node connectivity, and is omitted.

\begin{figure}[h]
\begin{centering}
\leavevmode\includegraphics[width=.9\linewidth]{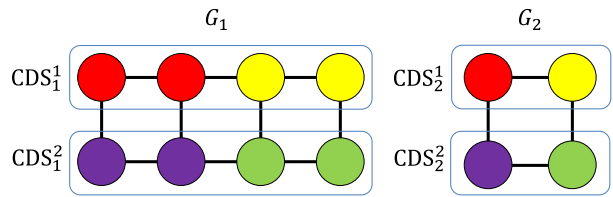}
\caption{An example of the partition of CDS nodes into groups. Every node in $G_1$ and $G_2$ has $n_{s1} = 1$ and $n_{s2} = 2$ supply nodes, respectively. Each CDS in $G_i$ is partitioned into two groups of size $n_{sj}$ ($i, j \in \{1,2\}, i \neq j$). In each graph, nodes that have the same color are in the same group. Between two graphs, nodes in groups with the same color are interdependent. The partition achieves the optimal supply node connectivity: 2 and 4 for $G_1$ and $G_2$, respectively.}
\label{fig:cds}
\end{centering}
\end{figure}

\subsection{CDS-based interdependence assignment}
We develop an algorithm to partition the nodes in $G_1$ into groups of size $n_{s2}$, and to partition the nodes in $G_2$ into groups of size $n_{s1}$. A group of size $n_{sj}$ is \emph{empty} if it contains no node, is \emph{full} if it contains $n_{sj}$ nodes, and is \emph{occupied} if it contains more than zero but fewer than $n_{sj}$ nodes. If $|V_i| / n_{sj}$ is an integer, we aim to partition $V_i$ into $|V_i| / n_{sj}$ full groups. Otherwise, if $|V_i| / n_{sj}$ is not an integer, we aim to partition $V_i$ into $\lfloor |V_i| / n_{sj} \rfloor$ full groups and one occupied group that contains $|V_i^{*}| = |V_i| - \lfloor {|V_i| / n_{sj}} \rfloor n_{sj} $ nodes ($\forall i, j \in \{1,2\}, i \neq j$), where $V_i^{*}$ denotes the nodes in the occupied group.
Since $|V_1|/n_{s_2} = |V_2|/n_{s_1}$, the total number of groups are the same in both $G_1$ and $G_2$.

Interdependence is assigned between nodes in two groups, one from each graph. For each node in $V_i \setminus V_i^{*}$, there are $n_{si}$ supply nodes. For each node in $V_i^{*}$, there are $|V_j^{*}| < n_{si}$ supply nodes. (Multiple interdependence edges exist between some nodes in $V_i^{*}$ and some nodes in $V_j^{*}$).
Given that nodes within a group depend on the same set of supply nodes while different groups of nodes depend on different supply nodes, we aim to partition nodes into groups such that \emph{a large number of groups} need to be removed in order to disconnect all the CDS. Consequently, a large number of supply nodes need to be removed in order to disconnect all the CDS. The partition of $V_i$ into $\lfloor {|V_i| / n_{sj}} \rfloor$ full groups follows Algorithm \ref{al:partitionCDS}. The remaining nodes (if any) form an occupied group $V_i^*$ if $|V_i| / n_{sj}$ is not an integer.

We denote by $h$ the number of disjoint CDS in $G_i$. Using the algorithm in \cite{Censor-Hillel2015}, $h = \Omega(k_i / \log^2 n_i)$ disjoint CDS can be computed. If there are extra nodes in $V_i$ that do not belong to the $h$ CDS, then these nodes are added to the largest CDS. Note that adding extra nodes to a CDS still yields a CDS, since these nodes are adjacent to the nodes in the original CDS.

\begin{algorithm}[h]
\caption{Assign nodes $V_i$ into $\lfloor {|V_i| / n_{sj}} \rfloor$ full groups of size $n_{sj}$.}
\label{al:partitionCDS}
\begin{enumerate}
\item Sort the $h$ disjoint CDS in the ascending order of their sizes. Denote the nodes in the $l$-th CDS in $G_i$ by $N^l$, $l = 1,2,\dots,h$. 
\item For $l$ from 1 to $h$, start with an empty group if available, and assign nodes from $N^l$ into the group. Repeat until all nodes are assigned. If there are not enough empty groups, assign the rest nodes into occupied groups until these groups become full. 
\item The algorithm terminates when the $\lfloor {|V_i| / n_{sj}} \rfloor $ groups become full.
\end{enumerate}
\end{algorithm}

The following example illustrates Step 2 of the algorithm. Before assigning $N^l$, there are \emph{enough} empty groups if the number of empty groups is at least $\lceil |N^l| / n_{sj} \rceil$. Nodes $N^l$ are assigned to $\lfloor |N^l| / n_{sj} \rfloor$ groups, which then become full. If $|N^l|/n_{sj}$ is not an integer, the remaining $|N^l| - \lfloor |N^l| / n_{sj} \rfloor n_{sj} \leq n_{sj} - 1$ nodes are assigned to another empty group and the group becomes occupied. On the other hand, if there are $n_r < \lceil |N^l| / n_{sj} \rceil$ empty groups before assigning $N^l$, then $n_r n_{sj}$ nodes in $N^l$ are assigned to the $n_r$ groups. The remaining nodes in $N^l$ and nodes in $N^{l+1},\dots, N^{h}$ are assigned to the already occupied groups.

The algorithm is further illustrated by Fig. \ref{fig:partition}. Suppose that $G_1$ has 12 nodes, and has three disjoint CDS, consisting of $|N^1| = 2, |N^2| = 4, |N^3| = 6$ nodes, respectively, and that $n_{s2} = 3$. Our goal is to assign the 12 nodes in $G_1$ to 4 groups of size 3. Before assigning nodes in $N^1$, all the four groups are empty. Thus, the two nodes in $N^1$ can be assigned to an empty group. After the assignment, the group becomes occupied, illustrated by the left figure in Fig. \ref{fig:partition}. Before assigning nodes in $N^2$, there are three empty groups. The assignment of $N^2$ uses two groups, one of which becomes full and the other becomes occupied (groups 2 and 3 in Fig. \ref{fig:partition}). Finally, when assigning $N^3$, there is only one empty group, and thus there are not enough empty groups to hold all the nodes in $N^3$. The last empty group can be assigned with 3 nodes. The remaining 3 nodes in $N^3$ are assigned to the occupied groups (\ie, groups 1 and 3 in Fig. \ref{fig:partition}).

\begin{figure}[h]
\begin{centering}
\leavevmode\includegraphics[width=.95\linewidth]{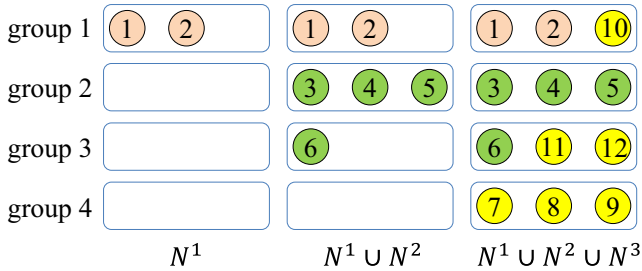}
\caption{Partition the CDS nodes $\{N^1$, $N^2$, $N^3\}$ into four groups of size three. The left, middle, right figures represent the snapshots after assigning nodes in $N^1, N^2, N^3$ in Step 2 of Algorithm \ref{al:partitionCDS}, respectively.}
\label{fig:partition}
\end{centering}
\end{figure}

We prove that disjoint CDS are sufficiently group-disjoint, by characterizing the number of groups that need to be removed to disconnect all the CDS.

\begin{lem}
\label{th:partition}
Let $V_i$ be assigned to groups according to Algorithm \ref{al:partitionCDS}. The minimum number of full groups that need to be removed, in order for each CDS to contain at least one removed node, is at least $\min(\lceil (h - 1)/2 \rceil, \lfloor |V_i|/n_{sj} \rfloor)$.
\end{lem}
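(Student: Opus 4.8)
The plan is to track how the algorithm interleaves CDS into groups and to argue that a removed set of full groups that hits every CDS must be large. First I would set up the key bookkeeping observation: after Step 2, the groups can be linearly ordered by the time they were first assigned a node, and each CDS $N^l$ occupies a \emph{contiguous block} of this ordered list of full groups (plus possibly the tail of a preceding occupied group and the head of a following one). The reason is that Algorithm \ref{al:partitionCDS} processes the CDS in order $l = 1, \dots, h$ and always fills groups before moving on, so the full groups ``belonging'' (even partially) to $N^l$ form an interval $I_l$, and consecutive intervals $I_l, I_{l+1}$ overlap in at most one group — the single group that straddles the boundary between $N^l$ and $N^{l+1}$. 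Thus the $h$ CDS correspond to $h$ intervals of full groups in which consecutive intervals share at most one endpoint.

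Next I would reduce the lemma to a covering statement about this interval structure. A set $F$ of full groups ``kills'' $N^l$ exactly when $F \cap I_l \neq \emptyset$. So we need the minimum size of a set $F$ of full groups meeting all $h$ intervals $I_1, \dots, I_h$ (with the total number of full groups being $\lfloor |V_i|/n_{sj}\rfloor$, which is why that quantity appears as an upper bound — if there are fewer full groups than the interval bound demands, $F$ can just be all of them). Since each chosen full group lies in at most two of the intervals (a group is shared by at most two consecutive CDS), one removed group can kill at most $2$ CDS, so at least $\lceil h/2 \rceil$ groups are needed — but we must be slightly more careful to get $\lceil (h-1)/2\rceil$, because the first and last intervals $I_1, I_h$ have an exposed endpoint that is not shared, so the ``2 per group'' bound is only tight in the middle. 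I would make this precise by a direct counting argument: if $F$ has $t$ full groups, they meet at most $2t$ intervals, but accounting for the fact that a group adjacent to $I_1$'s left end or $I_h$'s right end wastes one of its two ``slots,'' one gets $h \le 2t + 1$, i.e. $t \ge \lceil (h-1)/2\rceil$. Equivalently, one can phrase it as: delete the requirement to hit $I_1$ (handled by one extra group at the end if needed), leaving $h-1$ intervals in a chain where each group hits at most two, giving $\lceil (h-1)/2\rceil$.

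The main obstacle I expect is the boundary bookkeeping — specifically, being careful about which groups are \emph{full} versus \emph{occupied}, since the lemma speaks only of removing full groups, and the straddling group between $N^l$ and $N^{l+1}$ is full while the final occupied group $V_i^*$ (if any) is not counted. I would handle this by observing that whenever $N^l$ contributes to no full group at all (because all its nodes landed in occupied groups or a partial spillover), the interval $I_l$ could be empty, which would seem to break the argument; but Step 2's ordering guarantees occupied groups only arise as spillover and get filled by the \emph{next} CDS, so in fact every CDS except possibly the last touches at least one full group, and the last CDS touches a full group unless $\lfloor|V_i|/n_{sj}\rfloor$ full groups are simply exhausted — which is exactly the regime where the $\lfloor |V_i|/n_{sj}\rfloor$ term in the $\min$ takes over. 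Once this case analysis is pinned down, the interval-covering count gives the stated bound $\min(\lceil (h-1)/2\rceil, \lfloor |V_i|/n_{sj}\rfloor)$ directly.
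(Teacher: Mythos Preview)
Your argument rests on a misreading of Step~2 of Algorithm~\ref{al:partitionCDS}. You assert that the algorithm ``always fills groups before moving on,'' so that $N^{l+1}$ continues filling the partial group left by $N^l$, and consequently each CDS occupies a contiguous interval of groups with at most one group shared between consecutive CDS. But Step~2 says ``start with an \emph{empty} group if available'': each new CDS begins in a fresh empty group, leaving the previous CDS's partially filled group merely \emph{occupied}. Those occupied groups are only filled later, during the spillover phase when empty groups run out, and the spillover can send a single late CDS into the occupied groups of many earlier, non-consecutive CDS. The paper's own example (Fig.~\ref{fig:partition}) illustrates this: $N^3$ ends up in groups $1$, $3$, and $4$ --- not an interval --- and groups $1$ and $3$ each contain nodes from two \emph{non-consecutive} CDS ($N^1,N^3$ and $N^2,N^3$). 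So your interval structure simply does not hold, and the ``each removed group hits at most two consecutive intervals'' count has no foundation.

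The paper's proof takes a different route: it identifies the threshold index $k_{\text{th}}$ at which empty groups are exhausted, observes that the first $k_{\text{th}}$ CDS occupy pairwise \emph{disjoint} sets of groups (since each starts fresh), so hitting all of them already requires $k_{\text{th}}$ full groups, and then bounds $k_{\text{th}}$ from below via a slot-counting argument on the residuals $r^l = n_{sj} - (|N^l| \bmod n_{sj})$. The two cases $|N^{k_{\text{th}}+1}| \le n_{sj}$ and $|N^{k_{\text{th}}+1}| \ge n_{sj}+1$ yield the two branches of the $\min$. To repair your attempt you would need to abandon the interval picture and instead exploit this disjointness of the first $k_{\text{th}}$ CDS's group sets.
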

\begin{proof}
  Let $|N^l|$ denote the number of nodes in the $l$-th CDS of $G_i$, $\forall l \in \{ 1,\dots, h \}$. If $|N^l|$ is a multiple of $n_{sj}$, $\forall l \in \{1,\dots, h\}$, then nodes in $N^{l_1}$ are assigned to different groups from nodes in $N^{l_2}$, $\forall l_1, l_2 \in \{ 1,\dots, h \}, l_1 \neq l_2$. To remove at least one node from each of the CDS, $h$ full groups need to be removed. 
  In the rest of the proof, we focus on the case where $|N^l|$ is not a multiple of $n_{sj}$ for some $l \in \{1,\dots,h\}$. 

  In the first few assignments in Algorithm \ref{al:partitionCDS} when there are enough empty groups, nodes in $N^{l_1}$ are assigned to different groups from nodes in $N^{l_2}$, $\forall l_1, l_2 \in \{ 1,\dots, k_\text{th}\}, l_1 \neq l_2$. 
  In order to disconnect all the CDS, at least one node should be removed from each CDS. The removed nodes in CDS $N^l, l = 1,\dots, k_\text{th}$ belong to at least $k_\text{th}$ distinct groups. Therefore, at least $k_\text{th}$ groups need to be removed in order to disconnect all the CDS. (Note that these groups become full by the end of Algorithm \ref{al:partitionCDS}.) 

  \emph{Determining $k_\text{th}$:}
  Consider one CDS $N^l$ ($l \in \{1,\dots,k_\text{th}\}$). If $|N^l|/n_{sj}$ is not an integer, one group occupied by $N^l$ is not full, and the group can still be assigned with $r^l \leq n_{sj} - 1$ extra nodes. If $|N^l|/n_{sj}$ is an integer, then $r^l = 0$. The total number of extra nodes that can be assigned into these occupied groups is $\sum_{l=1}^{k_\text{th}} r^l \leq k_\text{th}(n_{sj} - 1)$.

  Consider the assignment when there are not enough empty groups to hold all the nodes in $N^l, \forall l = k_\text{th}+1,\dots, h$. 

  1) If $|N^{k_\text{th}+1}| \leq n_{sj}$, then $|N^l| \leq n_{sj}, l = 1,\dots, k_\text{th}$. (Recall that the CDS are sorted in the ascending order of their sizes.) Nodes in each CDS $N^l$ belong to a single occupied group, $l = 1,\dots, k_\text{th}$. Moreover, since there is no empty group available when assigning nodes in $N^{k_\text{th}+1}$, all the empty groups have been used, and $k_\text{th} = \lfloor |V_i|/n_{sj} \rfloor$.

  2) If $|N^{k_\text{th}+1}| \geq n_{sj} + 1$, the number of remaining CDS is at most
  \begin{align*}
  h - k_\text{th} & \leq \lfloor \frac{\sum_{l=1}^{k_\text{th}} r^l + n_{sj} - 1} {n_{sj}+1} \rfloor + 1 \\
  &\leq \lfloor \frac{(k_\text{th} + 1)(n_{sj} - 1)}{n_{sj} + 1} \rfloor + 1
  \leq k_\text{th} + 1.
  \end{align*}
  To see this, note that there is no empty group available when assigning nodes in $\cup_{l = k_\text{th} + 2}^{h} N^l$. Otherwise, all the nodes in $N^{k_\text{th} + 1}$ would have been assigned to empty groups, which contradicts the assumption. Let $N^o \subseteq \cup_{l = k_\text{th} + 2}^{h} N^l$ denote the nodes that will be assigned to the occupied groups (occupied by nodes in $N^l , l \in \{1,\dots,k_\text{th}\}$). Let $N^{*} \subseteq \cup_{l = k_\text{th} + 2}^{h} N^l$ denote the remaining nodes that cannot be assigned to the $\lfloor |V_i|/n_{sj}\rfloor$ groups when $|V_i|/n_{sj}$ is not an integer. By definition, $N^o \cup N^{*} = \cup_{l = k_\text{th} + 2}^{h} N^l$. We know that $|N^o|$ is at most $\sum_{l=1}^{k_\text{th}} r^l$, which is the number of extra nodes that the occupied groups can fit. Moreover, $|N^{*}|$ is at most $|V_i| - \lfloor |V_i|/n_{sj}\rfloor n_{sj} \leq n_{sj} - 1$. Therefore, $\sum_{l=1}^{k_\text{th}} r^l + n_{sj} - 1$ is an upper bound on the number of nodes in $\cup_{l = k_\text{th} + 2}^{h} N^l$. Since the size of $N^l$ ($k_\text{th} + 2 \leq l \leq h$) is at least $n_{sj} + 1$, the first term in the summation is an upper bound on the number of CDS $N^{k_\text{th} + 2},\dots,N^h$. The additional one (second term in the summation) accounts for the CDS $N^{k_\text{th} + 1}$.

  In summary, given that the total number of CDS $h = k_\text{th} + (h - k_\text{th}) \leq k_\text{th} + (k_\text{th} + 1)$, we obtain $k_\text{th} \geq (h - 1)/2$. Since $k_\text{th}$ is an integer, $k_\text{th}$ is at least $\lceil (h - 1)/2 \rceil$.
\end{proof}

Given that $n_{si}$ supply nodes need to be removed in order to remove a full group of nodes of $V_i$, we have the following result.
\begin{thm}
  Given $G_i$ with $n_i$ nodes and node connectivity $k_i$, and that every node has $n_{si}$ supply nodes, $\forall i \in \{1,2\}$, assign interdependence between nodes in $G_1$ and the nodes in $G_2$ by groups, obtained in Algorithm \ref{al:partitionCDS}. Then, the supply node connectivity of $G_i$ is $\Omega(\min (k_i n_{si}/\log ^2 n_i, n_j))$, $\forall i,j \in \{1,2\}, i \neq j$.
\end{thm}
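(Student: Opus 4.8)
The plan is to reduce the statement, for each fixed $i$ (take $i=1$, $j=2$; the case $i=2$ is symmetric under $1\leftrightarrow 2$), to Lemma~\ref{th:partition} combined with the surviving-CDS argument already underlying Theorem~\ref{th:cds}. Write $g=|V_1|/n_{s2}=|V_2|/n_{s1}$ for the common number of groups (I treat the divisible case; an occupied group can only force more supply-node removals and so only helps the lower bound). Under the assignment, the nodes of $G_1$ are partitioned into groups $P_1^{(1)},\dots,P_1^{(g)}$ of size $n_{s2}$ by Algorithm~\ref{al:partitionCDS} run on the $h=\Omega(k_1/\log^2 n_1)$ node-disjoint CDS of $G_1$ from \cite{Censor-Hillel2015}, and each $P_1^{(l)}$ is made interdependent with a group $P_2^{(l)}\subseteq V_2$ of $n_{s1}$ nodes (full groups paired with full groups, which is forced by $|V_1|/n_{s2}=|V_2|/n_{s1}$). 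First I would record the deterministic description of which demand nodes fail: a node $v\in P_1^{(l)}$ loses all its supply exactly when $P_2^{(l)}$ is entirely removed, so for any removed supply set $S\subseteq V_2$ the failed demand nodes form $F=\bigcup_{l\in L}P_1^{(l)}$ with $L=\{l: P_2^{(l)}\subseteq S\}$.

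Next I would show that any supply node cut $S$ of $G_1$ must have $F$ intersect every one of the $h$ disjoint CDS. Indeed, if a CDS $D$ satisfied $D\cap F=\emptyset$, then all of $D$ survives the removal of $F$; since $D$ is connected and dominating, $G_1\setminus F'$ stays connected (and nontrivial) for every $F'\subseteq F$, so $F$ contains no node cut of $G_1$ --- this is exactly the reasoning behind Theorem~\ref{th:cds} --- contradicting that $S$ is a supply node cut. Hence $F$ hits all $h$ CDS, so $L$ must be such that each CDS has a node in $\bigcup_{l\in L}P_1^{(l)}$.

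Then Lemma~\ref{th:partition} applied to $G_1$ gives that the number of \emph{full} groups $P_1^{(l)}$ with $l\in L$ is at least $\min(\lceil(h-1)/2\rceil,g)$. Translating back to supply nodes: for each such full group, $P_2^{(l)}\subseteq S$ with $|P_2^{(l)}|=n_{s1}$, and the sets $P_2^{(l)}$ are pairwise disjoint over distinct $l$, so
\[
|S|\ \ge\ n_{s1}\cdot\min\!\big(\lceil(h-1)/2\rceil,\ g\big)\ =\ \min\!\big(n_{s1}\lceil(h-1)/2\rceil,\ n_{s1}g\big)\ =\ \min\!\big(n_{s1}\lceil(h-1)/2\rceil,\ n_2\big),
\]
using $n_{s1}g=|V_2|=n_2$. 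Since $h=\Omega(k_1/\log^2 n_1)$, the first term is $\Omega(k_1 n_{s1}/\log^2 n_1)$; minimizing $|S|$ over all supply node cuts then yields supply node connectivity $\Omega(\min(k_1 n_{s1}/\log^2 n_1,\ n_2))$, and interchanging $G_1$ with $G_2$ gives the bound for $G_2$.

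I expect the main obstacle to be the bookkeeping at the interface between Lemma~\ref{th:partition} (which is phrased in terms of removing full \emph{groups of demand nodes}) and the count of removed \emph{supply nodes}: one must check that the group-matching pairs full $G_1$-groups with full $G_2$-groups and the occupied group with the occupied group, verify that the occupied group is irrelevant to the lower bound (Lemma~\ref{th:partition} already bounds the number of full groups in $L$ whether or not the occupied group is included), and confirm that failing only the occupied group can never by itself hit all CDS when $h$ is not tiny (for $h=1$, \ie\ $k_1=O(\log^2 n_1)$, the bound degenerates, consistent with the same caveat in Theorem~\ref{th:cds}). The surviving-CDS step and the arithmetic are routine once Lemma~\ref{th:partition} is available.
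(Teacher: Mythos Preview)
Your proposal is correct and follows essentially the same approach as the paper's own proof: invoke the CDS partition bound $h=\Omega(k_i/\log^2 n_i)$, apply Lemma~\ref{th:partition} to count the full groups that must be removed, and multiply by $n_{si}$ using $n_i n_{si}/n_{sj}=n_j$. Your version is more detailed---you make explicit the characterization of the failed demand set $F$ as a union of groups and spell out the surviving-CDS argument---whereas the paper's proof is a terse two-line citation of Lemma~\ref{th:partition} followed by the arithmetic; but the logical skeleton is identical.
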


\begin{proof}
  Using the algorithm in \cite{Censor-Hillel2015}, $h = \Omega (k_i / \log^2 n_i)$ disjoint CDS can be found in $G_i$. By Lemma \ref{th:partition}, the number of full groups that should be removed in order to remove at least one node from each CDS is $\min (\lceil (h - 1)/2 \rceil, \lfloor n_i / n_{sj} \rfloor)$, $\forall i, j\in \{1,2\}, i \neq j$. 

  Each group of $V_i$ can be removed by removing $n_{si}$ supply nodes in $G_j$. Noting that $h = \Omega (k_i / \log^2 n_i)$ and that $n_i n_{si}/ n_{sj} = n_j$, the supply node connectivity of $G_i$ is $\Omega(\min (k_i n_{si}/\log ^2 n_i, n_j))$, $\forall i, j\in \{1,2\}, i \neq j$.
\end{proof}

We have proved that the CDS-based interdependence assignment algorithm is an $O(\log^2 n_i)$-approximation algorithm in maximizing the supply node connectivity of $G_i$, $\forall i \in \{1,2\}$. 




\subsection{Random interdependence assignment}
We study the random assignment in order to maximize the supply node connectivity of both graphs. The random assignment algorithm is to randomly match $n_{s1}$ copies of nodes in $G_1$ with $n_{s2}$ copies of nodes in $G_2$, and assign interdependence between matched nodes. Under the assignment, each of the $n_i$ nodes in $G_i$ is supported by $n_{si}$ nodes in $G_j$ ($i,j \in \{1,2\}, i \neq j$). 

The key difference of the analysis from the random assignment algorithm for the one-way dependence model is as follows. By randomly removing $k'$ nodes in $G_2$, $k'n_{s2}$ nodes in the transformed graph of $G_1$ (by Algorithm \ref{al:tran}) are removed. In contrast, in the one-way dependence model (Section \ref{sc:random}), every node is removed with probability $k'n_{s2}/n_1 n_{s1}$, and the total number of node removals follows a binomial distribution with mean $k'n_{s2}$.
We derive the following lemma that bounds the probability of a graph being disconnected after a constant fraction of nodes are removed, instead of each node being removed with a constant probability as in Corollary \ref{th:constremoval}.
\begin{lem}
\label{th:sampling2}
Given graph $G$ with $n$ nodes and node connectivity $k = \omega(\log n)$, after randomly removing up to a \emph{constant} (less than one) fraction of $n$ nodes, the remaining nodes in $G$ are connected with probability $1 - \delta$ where $\delta = O(n e^{-\alpha' k})$ for some constant $\alpha'$.
\end{lem}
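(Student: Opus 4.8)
The plan is to reduce the ``fixed fraction'' removal model to the ``independent removal'' model of Lemma \ref{th:sampling} (equivalently Corollary \ref{th:constremoval}) via a standard concentration/coupling argument. Suppose we remove a fixed number $r = cn$ of nodes chosen uniformly at random, where $c < 1$ is a constant; we want to show $G$ stays connected with probability $1 - O(n e^{-\alpha' k})$. First I would set $p$ so that the expected number of nodes removed under independent sampling with removal probability $1-p$ is a bit larger than $r$, say choose a constant $c'$ with $c < c' < 1$ and set $1 - p = c'$. Under independent sampling, the number of removed nodes concentrates around $c' n$; by a Chernoff bound, the probability that fewer than $cn$ nodes are removed is at most $e^{-\Omega(n)}$, which is $o(n e^{-\alpha k})$ since $k = \omega(\log n)$ implies $k = o(n)$.

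The key step is the coupling: condition on the event that independent sampling removes at least $r = cn$ nodes. Conditioned on removing exactly $r' \geq r$ nodes, the set removed is uniform among all $r'$-subsets; one can then couple so that the fixed-$r$ removed set is a uniformly random $r$-subset of that $r'$-set. Since removing a superset can only disconnect the graph ``more,'' the event ``$G$ connected after removing the random $r$-subset'' contains the event ``$G$ connected after independent sampling'' intersected with ``at least $r$ nodes removed.'' Wait --- that inclusion goes the wrong way, so instead I would argue directly: the event that $G$ is connected after removing a uniform $r$-subset is \emph{implied by} the event that $G$ is connected after removing a uniform $r'$-subset for $r' \ge r$ under the nested coupling, because any connected set surviving the larger removal also survives the smaller one. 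Hence
\[
\Pr[G \text{ disconnected after removing } r \text{ nodes}] \;\le\; \Pr[G \text{ disconnected under indep.\ sampling}] \;+\; \Pr[\text{fewer than } r \text{ removed}].
\]
The first term is $O(n e^{-\alpha k})$ by Corollary \ref{th:constremoval} (since $p = 1-c'$ is a positive constant and $k = \omega(\log n)$), and the second is $e^{-\Omega(n)} = o(n e^{-\alpha k})$. Setting $\alpha'$ to any constant slightly smaller than $\alpha$ absorbs both terms into $O(n e^{-\alpha' k})$.

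The main obstacle is making the monotone coupling precise: one must verify that the ``uniform $r$-subset of a uniform $r'$-set is a uniform $r$-set'' identity holds and that connectivity is monotone decreasing under node removal (true by definition of a node cut), so that the inclusion of events is valid. A secondary subtlety is that the fixed-fraction number $cn$ need not be an integer and the conditioning value $r'$ varies, but these are handled by taking floors/ceilings and a union over the (at most $n$) possible values of $r'$, which only costs an extra factor of $n$ that is again absorbed since $k = \omega(\log n)$. No step requires new ideas beyond Chernoff plus the earlier sampling theorem; the lemma is essentially a ``derandomize the number of removed nodes'' restatement of Corollary \ref{th:constremoval}.
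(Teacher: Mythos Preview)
Your overall strategy---reduce fixed-size removal to independent removal via Chernoff and a monotonicity/coupling argument---is exactly the route the paper takes. However, there is a genuine gap in the monotonicity step. You assert that ``connectivity is monotone decreasing under node removal,'' i.e., that if $G[V\setminus S']$ is connected and $S\subseteq S'$ then $G[V\setminus S]$ is connected. This is \emph{false}: adding back a node $v\in S'\setminus S$ can produce an isolated vertex if all neighbors of $v$ lie in $S$. A concrete counterexample is the path $a\text{--}b\text{--}c\text{--}d\text{--}e$ with $S'=\{a,b,c\}$ and $S=\{b,c\}$: removing $S'$ leaves the connected edge $d\text{--}e$, but removing only $S$ leaves $\{a,d,e\}$ with $a$ isolated. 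Thus the inclusion of events you need in the coupling does not hold, and the displayed inequality is unjustified.

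The paper repairs exactly this point by strengthening the invariant from ``remaining nodes are connected'' to ``remaining nodes form a connected dominating set (CDS).'' That property \emph{is} monotone under adding nodes: if $D$ is a CDS of $G$ and $v\in V$, then $v$ has a neighbor in $D$ (domination), so $D\cup\{v\}$ is still connected and still dominating. With $A(r)$ redefined as ``the surviving $n-r$ nodes form a CDS,'' $\Pr(A(r))$ is genuinely non-increasing in $r$, and your Chernoff-plus-coupling inequality goes through verbatim. The price is that one must also bound the probability that the survivors under independent sampling fail to dominate; the paper does this by the simple union bound $n(1-p)^k$, using that every vertex has degree at least $k$. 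Combining $ne^{-\alpha k}$ (disconnection), $n(1-p)^k$ (non-domination), and $e^{-\Omega(n)}$ (Chernoff) gives the claimed $O(ne^{-\alpha' k})$.
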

\begin{proof}
We prove a stronger result that the remaining nodes form a connected dominating set (CDS) with high probability. In particular, we prove for the case where $(1 - \epsilon)(1 - p)n$ nodes are randomly removed, for a constant $\epsilon < 1$ and a constant $p \in (0,1)$.

Let $A(n_\text{rm})$ denote the event that the remaining nodes in $G$ form a CDS after randomly removing $n_\text{rm}$ nodes, where $n_\text{rm}$ is a deterministic value. Since adding extra nodes to a CDS still yields a CDS, $\Pr(A(n_\text{rm}))$ is decreasing in $n_\text{rm}$.

Consider the case where each node is randomly removed with probability $1 - p \in (0, 1)$. The number of removed nodes, $N_\text{rm}$, follows a binomial distribution with mean $(1-p)n$. Using the Chernoff bound, for a constant $\epsilon < 1$,
$$ \Pr(N_\text{rm} < (1 - \epsilon) (1 - p)n) \leq e^{-(1 - p)n \epsilon^2 / 2}. $$
The probability that the remaining nodes in $G$ form a CDS after removing $N_\text{rm}$ nodes is:
\begin{align}
  \Pr&(A(N_\text{rm})) =  \sum_{n_\text{rm} = 0}^{n} \Pr(A(n_\text{rm})) \Pr(N_\text{rm} = n_\text{rm}) \label{eq:total}\\
  \leq &  \Pr(A((1 - \epsilon) (1 - p)n)) \Pr(N_\text{rm} \geq (1 - \epsilon) (1 - p)n) \nonumber \\
  ~ & + 1 \Pr(N_\text{rm} < (1 - \epsilon) (1 - p)n) \label{eq:mono1}\\
  \leq & \Pr(A((1 - \epsilon) (1 - p)n)) + e^{-(1 - p)n \epsilon^2 / 2}, \label{eq:mono2}
\end{align}
where Eq. (\ref{eq:total}) follows from the law of total probability, Eq.~(\ref{eq:mono1}) follows from that $\Pr(A(n_\text{rm}))$ is non-increasing in $n_\text{rm}$, and Eq.~(\ref{eq:mono2}) follows from the Chernoff bound.
Thus,
$$ \Pr(A((1 - \epsilon) (1 - p)n)) \geq \Pr(A(N_\text{rm})) - e^{-(1 - p)n \epsilon^2 / 2}. $$

From the proof of Corollary \ref{th:constremoval}, we know that by removing $N_\text{rm}$ nodes, $G$ is disconnected with probability at most $ne^{-\alpha k}$, where $\alpha$ is a constant. Moreover, let $k'/n_c = 1 - p$ in Eq. (\ref{eq:domination}), the probability that the remaining nodes in $G$ do not form a dominating set is at most $n(1-p)^k$. Thus, by the union bound, the probability that the remaining nodes in $G$ do not form a connected dominating set is at most $1 - \Pr(A(N_\text{rm})) \leq ne^{-\alpha k} + n(1-p)^k$.

We now bound the probability that the remaining nodes in $G$ form a CDS, after randomly removing $p'n$ nodes, where $p' = (1 - \epsilon)(1 - p)$ is a constant.
\begin{align*}
  \Pr(A((1 - \epsilon) & (1 - p)n)) \geq  \Pr(A(N_\text{rm})) - e^{-(1 - p)n \epsilon^2 / 2} \\
  \geq & 1 - ne^{-\alpha k} - n(1 - p)^k - e^{-(1 - p)n \epsilon^2 / 2}
\end{align*}

Let $\alpha' = \min(\alpha, -\log(1 - p), (1-p)n \epsilon^2 /2k)$. Then $ne^{-\alpha k}, n(1-p)^k, e^{-(1 - p)n \epsilon^2 / 2} \leq ne^{-\alpha' k}$. Therefore, $\Pr(A(p'n)) \geq 1 - O(ne^{-\alpha' k})$. Moreover, since $\alpha$, $p$ are constants and $n = \Omega(k)$, $\alpha'$ is a constant.
\end{proof}

Then, following the analysis in Theorem \ref{th:random}, and noting Corollary \ref{th:cutvalue}, we obtain the following result.
\begin{thm}
  Given $G_i$ with $n_i$ nodes and node connectivity $k_i$, if each node in $G_i$ has $n_{si}$ supply nodes, by randomly matching $n_{si}$ copies of nodes in $G_i$ to $n_{sj}$ copies of nodes in $G_j$, and assigning interdependence between each pair of matched nodes, then the supply node connectivity of $G_i$ is $\Theta(\min(k_i n_{si}, n_j))$ with high probability, if $k_i n_{si} = \omega(\log (n_i n_{si}))$, $\forall i,j \in \{1,2\}, i \neq j$. If, in addition, $k_i n_{si} = \omega(n_j)$, then the supply node connectivity of $G_i$ is at least $(1 - \epsilon) n_j$ with high probability for any constant $\epsilon > 0$.
\end{thm}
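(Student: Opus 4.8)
The plan is to reduce the claim to the colored-graph analysis of Theorem~\ref{th:random}, the only new ingredient being that a random matching removes a \emph{fixed} number of vertices rather than each vertex independently. Apply Algorithm~\ref{al:tran} to $G_i$ to obtain the colored graph $\tilde G_i$, in which colors correspond to supply nodes of $G_j$; by Corollary~\ref{th:cutvalue}, $\tilde G_i$ has $n_i n_{si}$ nodes, node connectivity $k_i n_{si}$, and $n_c = n_j$ colors, and by Theorem~\ref{th:mapping} the supply node connectivity of $G_i$ equals the value of the minimum color node cut of $\tilde G_i$. The deterministic upper bound $\min(k_i n_{si}, n_j)$ is immediate: the colors carried by a minimum node cut of $\tilde G_i$ form a color node cut of size at most $k_i n_{si}$, and the set of all $n_j$ colors is a color node cut as well.

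For the lower bound, the key observation is that under a uniformly random perfect matching between the $n_i n_{si}$ copies on the $G_i$ side and the $n_j n_{sj} = n_i n_{si}$ copies on the $G_j$ side, removing a fixed set of $k'$ supply nodes of $G_j$ deletes exactly the $k' n_{sj}$ copies of $\tilde G_i$ matched to their copies, and these form a \emph{uniformly random} $(k' n_{sj})$-subset of $V(\tilde G_i)$ (the matching restricted to those $G_j$-copies is a uniformly random injection, whose image is uniform over subsets of that size). Whenever $k' \le (1-\epsilon) n_j$ for a constant $\epsilon > 0$, the deleted fraction is $k'/n_j \le 1-\epsilon < 1$, a constant; this is exactly the regime of Lemma~\ref{th:sampling2} (and not of Corollary~\ref{th:constremoval}, since a fixed fraction rather than each vertex independently is removed), so, using $k_i n_{si} = \omega(\log(n_i n_{si}))$, after such a deletion the surviving vertices of $\tilde G_i$ form a connected dominating set with probability $1 - O(n_i n_{si}\, e^{-\alpha' k_i n_{si}})$. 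The dominating-set conclusion is what makes the argument compatible with Definition~\ref{def:colorglobal}: if the full set of copies carrying $k'$ given colors is $V'$ and $V(\tilde G_i)\setminus V'$ is a connected dominating set, then deleting \emph{any} subset of $V'$ still leaves a connected graph, since every vertex of $V'$ is adjacent to the surviving dominating set.

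It then remains to union-bound over the $\binom{n_j}{k'}$ choices of $k'$ colors and to optimize $k'$, reproducing the three cases of the proof of Theorem~\ref{th:random} with $k_i n_{si}$ in place of $k$ and $n_j$ in place of $n_c$: if $k_i n_{si} = \Theta(n_j)$ take $k' = \Theta(\min(k_i n_{si}, n_j))$ and use $\binom{n_j}{k'} \le (e n_j / k')^{k'}$; if $k_i n_{si} = \omega(n_j)$ take $k' = (1-\epsilon) n_j$, for which $k' \log(e n_j/k') \le 2k' = o(k_i n_{si})$; and if $k_i n_{si} = o(n_j)$ partition the colors into $2k'$ groups and delete $k'$ whole groups at a time, so each deletion still removes a constant ($\tfrac12$) fraction of the copies while the number of union-bound events collapses to $\binom{2k'}{k'}\le(2e)^{k'}$, which is $o(1)$-summable for $k' = \Theta(k_i n_{si})$ (deleting any $k'$ individual colors is contained in deleting at most $k'$ groups, and the dominating-set argument transfers the guarantee). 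In each case $k_i n_{si} = \omega(\log(n_i n_{si}))$ forces the resulting union bound to be $o(1)$, so with high probability no removal of $k'$ colors disconnects $\tilde G_i$, whence the minimum color node cut has value at least $k'+1 = \Omega(\min(k_i n_{si}, n_j))$; together with the deterministic upper bound this gives $\Theta(\min(k_i n_{si}, n_j))$, and the second case gives the stronger $(1-\epsilon) n_j$ when $k_i n_{si} = \omega(n_j)$. The one genuinely new difficulty is the passage from independent removal to without-replacement sampling, which is exactly what Lemma~\ref{th:sampling2} resolves; after that the argument is a mechanical re-run of Theorem~\ref{th:random}.
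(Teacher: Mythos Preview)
Your proposal is correct and follows the same approach as the paper: reduce to the colored graph $\tilde G_i$ via Corollary~\ref{th:cutvalue}, replace Corollary~\ref{th:constremoval} by Lemma~\ref{th:sampling2} to handle the fixed-fraction (without-replacement) deletion arising from the random matching, and then rerun the three-case union-bound argument of Theorem~\ref{th:random} with $k_i n_{si}$ and $n_j$ in the roles of $k$ and $n_c$. The paper's own proof is terser (it simply cites Corollary~\ref{th:cutvalue} and Lemma~\ref{th:sampling2} and defers to Theorem~\ref{th:random}), but you have filled in exactly the intended details, including the observation that a random perfect matching makes the removed copies a uniformly random subset of the correct size.
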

\begin{proof}
  By Corollary \ref{th:cutvalue}, the transformed graph (by Algorithm \ref{al:tran}) $\tilde G_i$ has $n_i n_{si}$ nodes and node connectivity $k_i n_{si}$, $\forall i \in \{1,2\}$. The number of colors in $G_i$ is the number of nodes $n_j$ in $G_j$, $\forall i,j \in \{1,2\}, i \neq j$. Given Lemma \ref{th:sampling2}, the supply node connectivity of $G_i$ can be computed in the same approach as the proof for Theorem \ref{th:random}.
\end{proof}

Thus, the random assignment is an $O(1)$-approximation algorithm in maximizing the supply node connectivity of both $G_1$ and $G_2$, if $k_i n_{si} = \omega(\log (n_i n_{si}))$, $\forall i \in \{1,2\}$. If $k_i n_{si} = O(\log (n_i n_{si}))$, the approximation ratio is at most $O(\log (n_i n_{si}))$, since the supply node connectivity is at least one under any assignment, $\forall i \in \{1,2\}$.

\section{Numerical results}
\label{sc:simu}
In this section, we apply the algorithms in the previous sections and provide numerical results. We use MATLAB to generate network topologies and dependence assignment, and use JuMP \cite{LubinDunningIJOC} to compute the supply node connectivity by calling CPLEX to solve the integer programs in a workstation that has an Intel Xeon Processor (E5-2687W v3) and 64GB RAM. 

The key observations are as follows. First, the supply node connectivity for a network of reasonable size can be computed using the integer program in a short time. For example, the results can be obtained within one minute, for a network that has around 180 nodes and 650 edges. Second, the assignment algorithms have good performance even when the value of supply node connectivity is moderate. This complements the theoretical results that the assignment algorithms are optimal up to a constant or polylogarithmic factor. The numerical results therefore suggest that the algorithms are practical in the design of interdependent networks.

\subsection{\st supply node connectivity}
We use the XO communication network \cite{xo} of 60 nodes as an example of the demand network, and randomly generate 36 supply nodes (marked as triangles in Fig. \ref{fig:xo}) within the continental US. 

\begin{figure}[h]
\begin{centering}
\leavevmode\includegraphics[width=\linewidth]{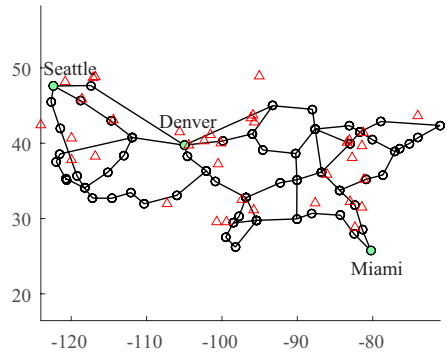}
\caption{XO network as a demand network, with randomly generated supply nodes. The $x$-axis represents longitude degrees (west), and the $y$-axis represents latitude degrees (north).}
\label{fig:xo}
\end{centering}
\end{figure}

Let each node in the XO network be supported by three nearest supply nodes. After transforming the network into a colored graph by Algorithm \ref{al:tran} and solving the MILP, we obtain that the supply node connectivity of the \st pair Seattle-Denver is 5. In contrast, the maximum \st supply node connectivity is 9, by assigning distinct supply nodes to each of the three node-disjoint paths (\ie, the path-based assignment outlined in Algorithm \ref{al:maxst}). As another example, the supply node connectivity of the \st pair Seattle-Miami is only 3, because one node in an \st path has the same set of three supply nodes as another node in a disjoint \st path. By assigning distinct supply nodes to two disjoint paths (Algorithm \ref{al:maxst}), the supply node connectivity of Seattle-Miami can be increased to 6.

\subsection{Global supply node connectivity}
If each node in the XO network is supported by its three nearest supply nodes, the global supply node connectivity is 3. In contrast, if each node is supported by three randomly chosen supply nodes, the global supply node connectivity can be increased to 5. It is close to the maximum possible global supply node connectivity 6, given that the node connectivity of the XO network is two and each node has three supply nodes. However, the CDS-based assignment (Algorithm \ref{al:maxglobal}) only guarantees that the supply node connectivity is at least 3, since there do not exist two disjoint CDS in the XO network.

\subsection{Bidirectional interdependence assignment}
We implement the bidirectional interdependence assignment algorithms on randomly generated Erdos-Renyi graphs. Let $G_i$ be an Erdos-Renyi graph with $n_i$ nodes. Let the probability that an edge exists between any two nodes be $p_i$. Each node in $G_i$ has $n_{si}$ supply nodes from $G_j$. Let $k_i$ denote the node connectivity of $G_i$. Recall that the maximum supply node connectivity is $k^s_{i \max} = \min(k_i n_{si}, n_j)$ ($i,j \in\{1,2\}, i \neq j$). Table \ref{fig:global2} depicts the supply node connectivity $k_i^s$ of $G_i$ under the CDS-based and random interdependence assignment algorithms. To obtain the numerical results for CDS-based interdependence assignment algorithm, instead of using the CDS partition algorithm in \cite{Censor-Hillel2015}, we use a greedy approach to compute the disjoint CDS, which has good performance for Erdos-Renyi graphs. The results are averaged over 10 instances for each of the two combinations of interdependent networks: 1) $n_1 = 50$, $n_2 = 75$, $p_1 = p_2 = 0.1$; 2) $n_1 = 50$, $n_2 = 75$, $p_1 = p_2 = 0.2$. From the results, we observe that the (near-linear time) CDS-based and the (linear time) random interdependence assignment algorithms yields near-optimal supply node connectivity in both graphs.

\begin{table}[h]
\centering
\caption{Supply node connectivity $k_i^s$ of random graphs under CDS-based and random assignments.}
\label{fig:global2}
\begin{tabular}{|l|l|l|l|l|l|l|}
\hline
$n_1$ & $p_1$ & $k_1$ & $n_{s1}$ & $k^s_{1 \max}$ & $k^s_1$ CDS & $k^s_1$ random \\ \hline
50    & 0.1   & 1.6   & 3        & 4.8             & 4.8        & 4.7          \\
50    & 0.2   & 3.6  & 3        & 10.8            & 10.2       & 10.0            \\ \hline
$n_2$ & $p_2$ & $k_2$ & $n_{s2}$ & $k^s_{2 \max}$ & $k^s_2$ CDS & $k^s_2$ random \\ \hline
75    & 0.1     & 2.4   & 2        & 4.8           & 4.6        & 4.6           \\
75    & 0.2    & 7.0     & 2        & 14.0            & 12.4       & 12.2 \\ \hline
\end{tabular}
\end{table}

\section{Conclusion}
\label{sc:conclude}
We studied the robustness of interdependent networks based on a finite-size, arbitrary-topology graph model. We defined supply node connectivity as a robustness metric, by generalizing the node connectivity in a single network. We developed integer programs to compute the supply node connectivity both for an \st pair and for a network, and developed approximation algorithms for faster computation. Moreover, we develop interdependence assignment algorithms to design robust interdependent networks.


Our study extends the shared risk group model, by considering that multiple risks together lead to the failure of a node. The color assignment algorithms in Section \ref{sc:assign} can be used as solutions to the less intensively studied design problems for the shared risk group model, to maximize the number of risks that a network can tolerate. 

\section*{Appendix}
\begin{proof}[Proof of Theorem \ref{th:np}]
  The minimum color node cut problem can be reduced from the \emph{vertex cover} problem. Given a graph $G'(V',E')$, the minimum vertex cover problem aims to select the minimum number of nodes $V^* \subseteq V'$ such that every edge in $E'$ is incident to at least one node in $V^*$.

  We construct a colored graph $G$ in which the value of the minimum color node cut equals the size the the minimum vertex cover in $G'$. Let $m'$ denote the number of edges in $G'$. Without loss of generality we assume that $m'$ is even. (Otherwise, one edge can be added parallel to any existing edge, which does not change the size of the minimum vertex cover.) Graph $G$ consists of four cliques of size $m'$ each. Nodes in every clique are divided into two disjoint sets of size $m'/2$. Four cliques are joined into a ring, by matching two disjoint set of $m'/2$ nodes of a clique to $m'/2$ nodes in each of the two adjacent cliques (see Fig. \ref{NPNodeCutG}).

  Then, assign colors to nodes in $G$. Consider two matchings $M_1$ and $M_2$ that connect two pairs of cliques. There are $m'$ edges in the union of the two matchings. Each edge $(v_1,v_2)$ in this union corresponds to an edge $(v'_1,v'_2)$ in $G'$. Let each node in $G'$ have a distinct color, and assign $v_1$ ($v_2$) the same color as $v'_1$ ($v'_2$). 
  Finally, assign each remaining node in $G$ a distinct color, and these remaining nodes are not adjacent to matchings edges $M_1$ or $M_2$. See Fig. \ref{NPNodeCutG} for an example of $G$, where $m' = 8$ and the number on each node represents its color, and the corresponding $G'$ represented by Fig. \ref{NPNodeCutGG}.

\begin{figure}[h]
\begin{centering}
\leavevmode\includegraphics[width=0.8\linewidth]{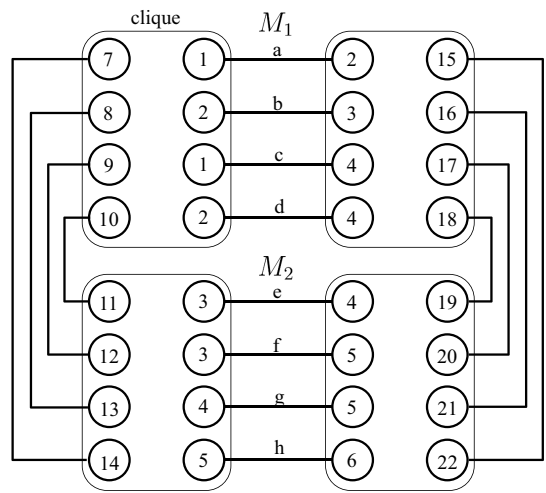}
\caption{In a colored graph $G$ where the number on each node represents its color, the minimum color node cut is $\{2,4,5\}$.}
\label{NPNodeCutG}
\end{centering}
\end{figure}

\begin{figure}[h]
\begin{centering}
\leavevmode\includegraphics[width=0.6\linewidth]{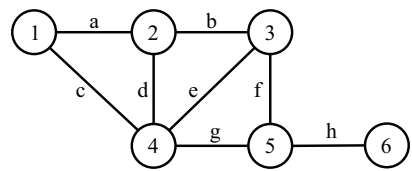}
\caption{The minimum vertex cover in $G'$ is $\{2,4,5\}$. }
\label{NPNodeCutGG}
\end{centering}
\end{figure}

  By removing at least one node incident to each of the $m'$ matching edges, $G$ becomes disconnected. In particular, the minimum color node cut of $G$ consists of a set of colors $C_c$ such that all the matching edges $M_1$ and $M_2$ are incident to at least one node that has a color in $C_c$. The nodes in $G'$ that have colors $C_c$ form the minimum vertex cover in $G'$, since every edge in $G'$ is adjacent to at least one node that has a color in $C_c$. (Note that the minimum color node cut of $G$ has size smaller than $m'$, because the number of nodes in a cut of $G$ is $m'$ and some nodes have the same color. Therefore, colors of nodes incident to the other two unlabeled matchings in Fig. \ref{NPNodeCutG} cannot be in the minimum color node cut.)

  Finally, to see that the reduction can be done in polynomial time, note that $G$ has $4m'$ nodes, $2{m'}^2$ edges, and $n' + 2m'$ colors, where $n'$ and $m'$ are the number of nodes and edges in $G'$, respectively. This concludes the proof.
\end{proof}

\begin{proof}[Proof of Theorem \ref{th:npst}]
  The minimum color \st node cut problem can be reduced from the \emph{hitting set} problem. Given a universe $U$ of elements, sets $S_i$ consisting of elements in $U$ ($i = 1,2,\dots,p$), the minimum hitting set problem aims to select a minimum number of elements from $U$ such that each set $S_i$ contains at least one selected element.

  We construct a colored graph in which the minimum color $st$ node cut is identical to the minimum hitting set. Construct $p$ node-disjoint paths between an $st$ pair, each of which corresponds to a set $S_i$. If $S_i$ has $j$ elements, then its corresponding path has $j$ nodes with colors that represent the elements in $S_i$. Nodes that correspond to the same element have the same color. The reduction can clearly be done in polynomial time. A minimum color $st$ node cut contains a set of colors $C_c^{st}$ such that every path has at least one node with a color in $C_c^{st}$. This is exactly the minimum set of elements such that every set contains at least one such element.

\begin{figure}[h]
\begin{centering}
\leavevmode\includegraphics[width=0.7\linewidth]{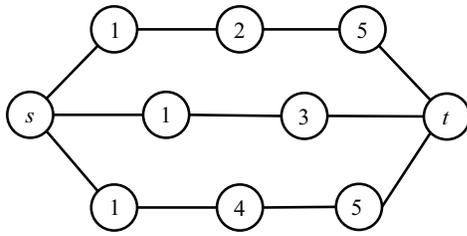}
\caption{The minimum color \st node cut is $\{1\}$. }
\label{NPSTNodeCut}
\end{centering}
\end{figure}
  We illustrate the reduction by the following example. Consider a hitting set problem where $U = \{1,2,3,4,5\}$, $S_1 = \{1, 2, 5\}, S_2 = \{1, 3\}, S_3 = \{1, 4, 5\}$. A minimum hitting set is $\{1\}$. The equivalent minimum color $st$ node cut problem is represented by Fig. \ref{NPSTNodeCut}.
\end{proof}

\bibliographystyle{IEEEtran}
\bibliography{vertexcut}

\end{document}